\RequirePackage[l2tabu, orthodox]{nag}
\RequirePackage{etoolbox}

\newtoggle{full}
\newtoggle{showoverflow}
\newtoggle{allowtodo}
\newtoggle{showtodo}
\newtoggle{anonymous}
\newtoggle{submission}
\newtoggle{llncs}
\newtoggle{fullpage}
\newtoggle{toctitlepage}
\newtoggle{warningtodo}

\providetoggle{forcefull}
\providetoggle{forceconf}
\iftoggle{forcefull}{
  \toggletrue{full}
}{
  \iftoggle{forceconf}{
    \togglefalse{full}
  }{
    \toggletrue{full}
  }
}

\toggletrue{allowtodo}
\toggletrue{showtodo}

\toggletrue{fullpage}
\toggletrue{toctitlepage}

\ifboolexpr{togl{full} and (not togl{submission})}{
  \togglefalse{llncs}
}{
  \toggletrue{llncs}
}

\newcommand{\remove}[1]{}

\iftoggle{llncs}{
  \documentclass[envcountsame,envcountsect,runningheads]{llncs}
}{
  \documentclass[envcountsame,envcountsect,runningheads]{llncs}

}

\iftoggle{showoverflow}{
  \overfullrule=10pt
}{}

\iftoggle{fullpage}{
\usepackage{fullpage}
}{}

\makeatletter
\newcommand*{\toccontents}{\@starttoc{toc}}
\makeatother

\usepackage[utf8]{inputenc}
\usepackage[T1]{fontenc}
\usepackage{lmodern}
\usepackage{microtype}

\usepackage{amsmath}
\usepackage{amssymb}
\usepackage{mathtools}
\usepackage{mathrsfs}

\usepackage{booktabs}
\usepackage{array}
\newcolumntype{L}[1]{>{\raggedright\arraybackslash}m{#1}}
\newcolumntype{C}[1]{>{\centering\let\newline\\\arraybackslash\hspace{0pt}}m{#1}}
\newcolumntype{R}[1]{>{\raggedleft\arraybackslash}m{#1}}
\usepackage{threeparttable}
\usepackage{blkarray}
\usepackage{multirow}

\usepackage[noadjust]{cite}
\usepackage{url}
\usepackage{xcolor}
\usepackage{xspace}
\usepackage{needspace}
\usepackage[inline]{enumitem}

\usepackage{pifont}

\usepackage{tikz}
\usetikzlibrary{arrows.meta,calc,positioning,fit,backgrounds,shapes.geometric,decorations.pathreplacing}

\usepackage{float}
\usepackage{graphicx}
\usepackage{caption,subcaption}
\usepackage{framed}
\usepackage{mdframed}
\mdfsetup{skipabove=0pt,skipbelow=0pt}

\usepackage{braket}

\usepackage[pdfpagelabels=true,\iftoggle{llncs}{}{pagebackref}]{hyperref}

\hypersetup{
  linktoc = page,
  pdfpagemode = UseNone,
  colorlinks,
  linkcolor={red!50!black},
  citecolor={blue!50!black},
  urlcolor={blue!80!black}
}

\iftoggle{llncs}{}{
  \renewcommand*{\backref}[1]{}
  \renewcommand*{\backrefalt}[4]{
    \ifcase #1
    (Not cited.)
    \or
    (Page~#2.)
    \else
    (Pages~#2.)
    \fi
  }

}

\makeatletter
\AtBeginDocument{
  \hypersetup{
    pdftitle = {\@title},
  }
}
\makeatother

\usepackage[capitalize]{cleveref}

\usepackage{algorithm}
\usepackage[noend]{algpseudocode}

\expandafter\patchcmd\csname\string\algorithmic\endcsname
  {\labelwidth 0.5em}{\labelwidth0pt\labelsep0pt}{}{}

\makeatletter
\newcounter{algorithmicH}
\let\oldalgorithmic\algorithmic
\renewcommand{\algorithmic}{
  \stepcounter{algorithmicH}
  \oldalgorithmic}
\renewcommand{\theHALG@line}{ALG@line.\thealgorithmicH.\arabic{ALG@line}}
\makeatother

\renewcommand{\eqref}[1]{\mbox{Equation~(\ref{#1})}}

\AtEndEnvironment
{proof}
{\qed}

\newsavebox{\fboxenvbox}

\iftoggle{showoverflow}{
  \overfullrule=10pt
}{}

\newcommand{\etal}{et al.\xspace}

\newcommand{\C}{\mathbb{C}}
\newcommand{\F}{\mathbb{F}}

\newcommand{\N}{\mathbb{N}}

\newcommand*{\prob}[1]{{\Pr}\left[\,{#1}\,\right]}

\newcommand*{\Tr}[1]{{\mathsf{Tr}}\left[\,{#1}\,\right]}
\newcommand*{\Ptr}[2]{{\mathsf{Tr}_{#1}}\left[\,{#2}\,\right]}

\newcommand*{\Se}[1]{{\mathsf{S}}\left[\,{#1}\,\right]}

\newcommand{\calA}{\mathcal{A}}
\newcommand{\calC}{\mathcal{C}}

\newcommand{\bfm}[1]{{\normalfont\bfseries #1}}

\makeatletter
\newcommand\suchthat{
 \@ifstar
  {\mathrel{}\middle|\mathrel{}}
  {\mid}
}
\makeatother

\newcommand{\defeq}{\coloneqq}

\algdef{SnE}[Oracle]{Oracle}{EndOracle}
  [1]{\underline{#1}}
  {}

\algdef{SnE}[AlgoExp]{AlgoExp}{EndAlgoExp}
  [1]{#1}
  {}

\newcommand{\ve}{\varepsilon}
\newcommand{\eps}{\ve}

\DeclareMathOperator{\negl}{negl}

\newcommand{\QNCz}{\ensuremath{\mathsf{QNC^0}}\xspace}

\newcommand{\QACzf}{\ensuremath{\mathsf{QAC^0_f}}\xspace}

\newcommand{\bits}{{\{0,1\}}}

\DeclareMathOperator{\poly}{\mathsf{poly}}
\DeclareMathOperator{\polylog}{\mathsf{polylog}}
\DeclareMathOperator{\Bern}{\mathsf{Bern}}
\DeclareMathOperator{\Unif}{\mathsf{Unif}}

\newcommand{\bA}{\mathbf{A}}
\newcommand{\bC}{\mathbf{C}}
\newcommand{\bS}{\mathbf{S}}
\newcommand{\bE}{\mathbf{E}}
\newcommand{\bo}{\mathbf{o}}
\newcommand{\bp}{\mathbf{p}}
\newcommand{\br}{\mathbf{r}}
\newcommand{\by}{\mathbf{y}}
\newcommand{\bx}{\mathbf{x}}
\newcommand{\bz}{\mathbf{z}}
\newcommand{\bw}{\mathbf{w}}

\newcommand{\bs}{\mathbf{s}}
\newcommand{\bu}{\mathbf{u}}
\newcommand{\bv}{\mathbf{v}}

\makeatletter
\newcommand\mynobreakpar{\par\nobreak\@afterheading}
\makeatother

\providecommand{\orcidID}{}
\renewcommand{\orcidID}[1]{
	\!
	\raisebox{-1pt}{
		\href{https://orcid.org/#1}{\protect\includegraphics[width=9pt]{orcid.pdf}}
	}
}

\usepackage{nicefrac}
\usepackage{comment}
\usepackage[many]{tcolorbox}
\usepackage{xparse}
\usepackage{xcolor}
\definecolor{caribbeangreen}{rgb}{0.0, 0.8, 0.6}
\definecolor{calpolypomonagreen}{rgb}{0.12, 0.3, 0.17}
\definecolor{darkcyan}{rgb}{0.0, 0.55, 0.55}
\definecolor{cyan(process)}{rgb}{0.0, 0.72, 0.92}
\definecolor{shamrockgreen}{rgb}{0.0, 0.62, 0.38}
\iftoggle{allowtodo}{
  \iftoggle{showtodo}{
    \newcommand{\todo}[1]{\textcolor{red}{TODO: #1}\iftoggle{warningtodo}{\PackageWarning{TODO:}{#1!}}{}}
    \newcommand{\andru}[1]{\textcolor{orange}{AG: #1}\iftoggle{warningtodo}{\PackageWarning{AG:}{#1!}}{}}

  }{
    \newcommand{\todo}[1]{}
    \newcommand{\andru}[1]{}

  }
}

\title{Pseudoentanglement in constant depth \texorpdfstring{\\}{: } {\large How trivial states can have non-trivial entanglement structure}}
\titlerunning{Pseudoentanglement in constant depth}

\date{}

\iftoggle{anonymous}{
  \author{}
  \institute{}
}{
	\author{
		Alexandru Gheorghiu
	}
	\hypersetup{
		pdfauthor = {Alexandru Gheorghiu}
	}

	\authorrunning{A.\ Gheorghiu}

	\institute{
		IBM Research \\
		\href{mailto:agheorghiu@ibm.com}{agheorghiu@ibm.com}
	}
}

\usepackage[strings]{underscore}

\iftoggle{full}{
	\newcommand*{\appref}[1]{Appendix~\ref{#1}}
}{
	\newcommand*{\appref}[1]{the full version}
}

\begin{document}

\iftoggle{full}{
  {\def\addcontentsline#1#2#3{}
  \renewcommand{\rm}{\textrm}
  \maketitle}
}{
  \maketitle
}

\iftoggle{full}{
  \nottoggle{submission}{
    \iftoggle{toctitlepage}{
      \vspace{-0.25cm}
    }{}
  }{}
}{}

\sloppy

\begin{abstract}
  We construct a family of 2D-local constant-depth quantum circuits that output states whose entanglement entropy across a specified cut cannot be estimated in quantum polynomial time.
  As constant-depth quantum circuits can be learned from polynomially many quantum samples, our resulting \emph{pseudoentangled} states are implicitly \emph{public-key} and not \emph{pseudorandom}. This separates pseudoentanglement from pseudorandomness in the shallow-circuit regime: the former is possible, while the latter is not.
  The construction is based on the quantum intractability of the \emph{Dense-Sparse Learning Parity with Noise} problem introduced in~\cite{dao2024lossy} and uses a bounded-fan-in, bounded-fan-out classical \emph{randomized encoding} for linear maps $\bx\mapsto \mathbf{M}\bx,$ which could be of independent interest.
  As applications, we obtain quantum hardness for the problem of learning the entanglement structure (across a fixed cut) of the ground-state of 1D and 2D local Hamiltonians. The 1D Hamiltonian has an inverse polynomial gap, whereas the 2D one has a constant gap. This complements the result of~\cite{groundstatehardness} that showed only factoring-based hardness for the 1D case, though achieving a volume versus area entanglement difference.
\end{abstract}

\section{Introduction}
\label{sec:intro}
The study of entanglement is central to quantum information and computation, where entanglement is treated as a resource. Estimating how much entanglement is produced by a given process is therefore a natural and important problem. Given a description of a quantum circuit, a quantum computer can efficiently determine whether the output pure state is close to a product state across a given bipartition (or cut)~\cite{v011a003}. On the other hand, estimating the entanglement entropy itself appears to be hard, even for quantum computers~\cite{v006a003}. This motivates the notion of \textit{pseudoentanglement}: states whose entanglement structure is computationally hard to determine~\cite{ITCS:ABFGVZ24,gheorghiu}.

Formally, a family of $n$-qubit pure states $\{\ket{\psi_n}\}_{n>0}$ is pseudoentangled if there does not exist a quantum polynomial-time (QPT) algorithm that, given $\poly(n)$ copies of $\ket{\psi_n}$, can decide whether $\ket{\psi_n}$ has entanglement entropy, across a specified cut, below $a$ or above $b$, where $b-a\ge 1/\poly(n)$, with probability non-negligibly greater than $1/2$.\footnote{One can also consider stronger notions of pseudoentanglement in which the gap between high and low entanglement is $\Omega(n)$ versus $O(\log n)$ and is required across every cut, rather than across a single fixed cut~\cite{ITCS:ABFGVZ24}.}

Pseudoentangled states can be viewed as a weaker version of \emph{quantum computational pseudorandomness}: the only property that is hidden is the amount of entanglement entropy. The more standard notions of quantum computational pseudorandomness are \emph{pseudorandom states} (PRSs)~\cite{JiLiuSong18}, which are efficiently generated keyed families of pure states that are computationally indistinguishable from Haar-random states given polynomially many copies, and \emph{pseudorandom unitaries} (PRUs)~\cite{MetgerPSY24,MaHuang24}, which are efficiently generated keyed families of unitaries that are computationally indistinguishable from Haar-random unitaries to polynomial-time distinguishers with oracle access to those unitaries. Both PRSs and PRUs have found several cryptographic applications, including encryption-, commitment- and signature-type primitives~\cite{C:AnaQiaYue22,C:MorYam22}. They have also appeared in recent discussions of quantum gravity: Harlow and Hayden linked computational difficulty to the black-hole information paradox, and subsequent works used pseudorandomness to argue that the so-called AdS/CFT dictionary, a mapping between states and observables in a quantum field theory and those in a quantum gravity theory, is exponentially complex~\cite{HarlowHayden,AaronsonBarbados16,ITCS:BouFefVaz20}. A recurring concern, however, is that the use of pseudorandom states or unitaries in this way makes it so that this complexity is effectively ``hidden behind a black-hole event horizon''~\cite{SusskindHCT20}. Pseudoentanglement was, in part, motivated by trying to address this concern~\cite{gheorghiu,ITCS:ABFGVZ24,AkersBCKMV24}. In addition to these potential applications in quantum gravity research, pseudoentanglement has also been used to show hardness of entropy estimation and general entanglement-learning tasks~\cite{gheorghiu,ITCS:ABFGVZ24,publickey,groundstatehardness,ChengFI25}.

As alluded to above, one interesting, and perhaps surprising, feature of pseudoentangled states is that \emph{they need not be pseudorandom}. Any pseudorandom state construction requires that the circuit preparing the states is hidden from the observer. In~\cite{gheorghiu,publickey}, it was shown that one can instead construct pseudoentangled families whose preparation circuits are public and of polynomial size. Such states are distinguishable from Haar-random states, which necessarily have exponential circuit complexity. These are referred to as \emph{public-key} pseudoentangled states. They allow one to prove hardness results in a stronger model than is usually available for either pseudorandom states or unitaries. For pseudorandom states and unitaries, hardness is usually expressed in a \emph{sample complexity} or \emph{black-box} model where one is either given copies of some unknown state or is allowed queries to an oracle and the goal is to show that there are no algorithms that can distinguish the pseudorandom objects from Haar random ones with polynomially-many samples, or queries, respectively. By contrast, public-key pseudoentanglement allows one to work in the \emph{computational complexity} model, where the input is a classical description of a quantum state, such as a circuit or a Hamiltonian, and hardness is shown based on standard computational assumptions, such as the hardness of factoring or Learning with Errors (\textsf{LWE}).

Given this distinction between pseudorandmness and pseudoentanglement, some natural questions that arise are: how far can we push the separation between the two notions? Could there be a large discrepancy in the complexity required to prepare pseudorandom states, compared to pseudoentangled states? To that end, we consider the \emph{circuit depth} required for producing these states. In terms of upper bounds, it is known that circuits of logarithmic depth, assuming all-to-all connectivity, suffice for both pseudorandom and pseudoentangled states. More recently, it has been shown that one can construct PRUs and PRSs in depth $\polylog(n)$ in one dimension and depth $\poly(\log\log n)$ with all-to-all connectivity~\cite{SchusterHH24}, and even in constant-time models such as \QACzf---informally, constant-depth quantum circuits augmented with stronger nonlocal operations such as multi-control Toffoli gates and multi-target CNOT gates~\cite{FoxmanPVY26}.
On the other hand, the result of~\cite{shallowlearning} shows that constant-depth quantum circuits (with bounded fan-in gates) can always be learned efficiently given polynomially many copies of their output states when acting on $\ket{0^n}$. This rules out constructions of both pseudorandom states and pseudorandom unitaries in the standard bounded fan-in circuit class \QNCz---that is, polynomial-size constant-depth circuits built from one- and two-qubit gates.
Arguably, this is not surprising, as constant-depth circuits are very limited in the states they can prepare. It is known, for instance, that they cannot produce highly nonlocal states such as GHZ or W states~\cite{watts}, which are constructible in \QACzf. In condensed matter physics, states that can be mapped to a product state by a constant-depth local circuit are referred to as \emph{trivial states}, and they all belong to the same quantum phase of matter. However, the fact that pseudoentangled states need not have their preparation circuits kept secret begs the question

\begin{quote}
    \centering \emph{Can pseudoentangled states be prepared by constant depth geometrically local quantum circuits?}
\end{quote}

We show that the answer is yes, by giving a construction of 2D geometrically local constant-depth circuits that produce pseudoentangled states, based on plausible post-quantum cryptographic assumptions.\footnote{We note that version 1 of~\cite{gheorghiu} claimed that constant-depth circuits can produce states with hard-to-estimate entanglement entropy. While this is technically correct, that construction implicitly assumed the availability of unbounded fan-out gates, a point clarified in version 2. Our construction does \emph{not} use unbounded fan-out; it requires only standard one- and two-qubit gates. In terms of complexity classes, the distinction is that the earlier result used $\mathsf{QNC^0_f}$ circuits, whereas the one here uses \QNCz (in fact, a geometrically local version of \QNCz). See \Cref{def:faninout} for the locality terminology used throughout.} Specifically, the construction is based on Dense-Sparse Learning Parities with Noise, a code-based post-quantum assumption introduced by Dao and Jain~\cite{dao2024lossy}. The main technical ingredient, which we isolate as a separate theorem of independent interest, is a \emph{perfect randomized encoding} of a dense linear map with bounded fan-in and bounded fan-out. The encoding preserves the entropy of that mapping, up to a fixed additive correction. By then choosing appropriate linear mappings whose output entropy under a random input is either large or small, and applying the randomized encodings of those mappings to a collection of product states, we obtain the desired pseudoentangled states.
The locality of the randomized encoding makes it so that the circuits preparing these states are in \QNCz.

We then leverage this result to prove hardness consequences in the local-Hamiltonian input model. First, by conjugating one-local projectors with our \QNCz preparation circuit, we obtain a simple family of 2D frustration-free constant-gap local Hamiltonians whose unique ground states inherit the entanglement gap. Second, using a 1D Feynman--Kitaev history-state construction, we obtain 1D local Hamiltonians with inverse-polynomial gap whose unique ground states retain the same fixed-cut entanglement gap up to a small additive loss. This gives a 1D Hamiltonian hardness result from a post-quantum code-based assumption. We compare this to the result of Bouland, Zhang, and Zhou~\cite{groundstatehardness}, that also examined the hardness of estimating entanglement entropy of ground states of local Hamiltonians. The differences between our result and theirs are that we only achieve a small entanglement difference for a fixed cut rather than their stronger near-area-law versus near-volume-law difference for geometric cuts. However, our 1D hardness is based on the post-quantum Dense-Sparse \textsf{LPN} assumption, whereas theirs is based on factoring, which is quantumly tractable.

\subsection{Main results}

Our main result is the construction of publicly samplable families of 2D-local constant-depth circuits on $n$ qubits whose output states have different entanglement entropy across a single fixed cut $W\subseteq[n]$, while the public circuit descriptions remain computationally indistinguishable. More formally we have:

\begin{theorem}[Constant-depth pseudoentanglement]
    \label{thm:mainintro}
    Assuming the intractability of Dense-Sparse \textsf{LPN} for quantum polynomial-time (QPT) adversaries, for all sufficiently large $n>0$ there exists an efficiently samplable distribution over tuples
    \[
        (\calC_n^{low},\calC_n^{high},W_n),
    \]
    where each $\calC_n^b$ is a 2D-local constant-depth circuit on $n$ qubits and $W_n\subseteq[n]$, such that:
    \begin{enumerate}
        \item \bfm{Additive entanglement entropy gap.}
        There exists a function $\Delta:\N\to\mathbb{R}_{\ge 1}$ such that, with overwhelming probability over the sampled tuple, letting
        \[
            \ket{\psi_n^{low}}=\calC_n^{low}\ket{0^n},
            \qquad
            \ket{\psi_n^{high}}=\calC_n^{high}\ket{0^n},
        \]
        and
        \[
            \rho_n^b=\Ptr{W_n}{\ket{\psi_n^b}\bra{\psi_n^b}},
            \qquad b\in\{low,high\},
        \]
        one has
        \[
            \Se{\rho_n^{high}}
            \ge
            \Se{\rho_n^{low}}+\Delta(n).
	        \]
	        Moreover, the gap can be chosen so that $\Delta(n)=\omega(\log n)$.

        \vspace{0.5em}

	        \item \bfm{Indistinguishability.}
        For every QPT algorithm $\calA$,
        \[
            \left|
                \prob{1\leftarrow \calA[\calC_n^{high},W_n]}
                -
                \prob{1\leftarrow \calA[\calC_n^{low},W_n]}
            \right|
            =
            \negl(n),
        \]
        where the probability is over the sampled tuple and the internal randomness of $\calA$.
    \end{enumerate}
\end{theorem}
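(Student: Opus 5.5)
The plan is to reduce the statement to three ingredients: the Dense-Sparse \textsf{LPN} assumption of~\cite{dao2024lossy}, a local perfect randomized encoding of linear maps $\bx\mapsto\mathbf{M}\bx$ over $\F_2$, and the standard entropy bookkeeping for graph-like states built from classical linear functions. The Dense-Sparse \textsf{LPN} assumption says that a matrix $\mathbf{M}\in\F_2^{m\times k}$ of one of two kinds cannot be told apart by a QPT adversary. The first kind is dense and random, so it has full rank $k$ with overwhelming probability. The second kind is dense-looking but lossy, of the form $\mathbf{M}=\mathbf{A}\mathbf{S}+\mathbf{E}$ (or the analogous Dao--Jain form) with $\mathbf{S}$ sparse. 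Such a matrix has a large kernel, so that $\mathbf{M}\bx$ for uniform $\bx$ has entropy at most $k-\Delta'$ for some $\Delta'=\omega(\log n)$. We set $\calC^{high}$ and $\calC^{low}$ to be the circuits built from a dense full-rank $\mathbf{M}$ and a lossy $\mathbf{M}$ respectively. Then indistinguishability (item 2) follows immediately: the map $\mathbf{M}\mapsto(\calC_{\mathbf{M}},W)$ is public and efficient, so a distinguisher for the circuits yields a distinguisher for Dense-Sparse \textsf{LPN}.

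The circuit for a given $\mathbf{M}$ is built as follows. Prepare $\ket{+}^{\otimes k}$ on an input register $X$, which lies on one side of the cut. Then coherently compute a \emph{randomized encoding} $\hat f(\bx;\br)$ of $f(\bx)=\mathbf{M}\bx$, with the randomness $\br$ also held in uniform superposition, into an output region. The encoding must be perfect, and each output bit of $\hat f$ must depend on $O(1)$ bits of $(\bx,\br)$ (bounded fan-in). Each input bit must feed $O(1)$ output bits (bounded fan-out). I would construct it in two stages. First, use the standard Applebaum--Ishai--Kushilevitz style encoding of a sum $y_i=\sum_j M_{ij}x_j$ as a telescoping chain $(z_1+r_1,\, r_1+z_2+r_2,\,\dots)$, which makes the fan-in constant. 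Second, replace each $x_j$ by fresh copies $x_j^{(1)},x_j^{(2)},\dots$ linked through encoded equality constraints $x_j^{(t)}+x_j^{(t+1)}$, which are themselves randomized so as to reveal nothing. This makes the fan-out constant.

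Because $\hat f$ is linear in $(\bx,\br)$ and perfect, the map $(\bx,\br)\mapsto \hat f(\bx,\br)$ is a linear map whose output distribution is uniform over a coset structure. Its image entropy equals $H(\mathbf{M}\bx)+|\br|$ exactly; this is the ``fixed additive correction''. Each output bit is an XOR of $O(1)$ inputs, so the whole thing is a CNOT network of bounded degree. Bounded fan-in and fan-out give a bounded-degree bipartite dependency graph, which can be edge-coloured with $O(1)$ colours, so the network has constant depth. Embedding the bounded-degree layout into a 2D grid with constant-depth SWAP routing then makes it 2D-local; I would lay the chains out as local strips to make this routing work.

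For the entropy gap (item 1), let $W$ be the output qubits. The state is $2^{-(k+|\br|)/2}\sum_{\bx,\br}\ket{\bx,\br}\ket{\hat f(\bx,\br)}$. The reduced state on $W$ is diagonal, with entropy equal to the entropy of the random variable $\hat f(\bx,\br)$. This gives $k+|\br|$ in the full-rank case and at most $k+|\br|-\Delta'$ in the lossy case. Hence $\Delta=\Delta'$, which we can take to be $\omega(\log n)$ by choosing the sparsity and dimension parameters of Dense-Sparse \textsf{LPN} appropriately. The main obstacle is the bounded fan-out step. Naive copying of $x_j$ costs fan-out and depth, so the copy chain must stay perfectly private and linear while keeping every gate local in 2D. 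The fallback is to accept $O(\log)$-length chains laid out in 2D strips and to check directly that the resulting linear map has the exact rank of $[\mathbf{M}\mid\text{randomness}]$.
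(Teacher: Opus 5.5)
There is a genuine gap in item~1, and it is more basic than the fan-out step you flag. You feed the linear map the uniform input $\ket{+}^{\otimes k}$. For uniform $X$, $\mathbf{M}X$ is uniform on the column space of $\mathbf{M}$, so $H(\mathbf{M}X)=\mathrm{rank}(\mathbf{M})$. By your own bookkeeping, the cut entropy is then $\mathrm{rank}(\mathbf{M})+|\br|$. But $\mathbf{M}$ is visible in the public circuit (it is exactly the CNOT pattern), and its rank is computable by Gaussian elimination. So the entanglement entropy is efficiently computable from $(\calC,W)$: a distinguisher just checks whether $\mathbf{M}$ has full rank. For this design, items~1 and~2 are incompatible under \emph{any} assumption.

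Relatedly, your description of Dense-Sparse \textsf{LPN} is off. The lossy matrix is not low-rank: $[\bA;\bS\bA\oplus\bE]$ has the same row space as $[\bA;\bE]$, so the secret $\bS$ contributes nothing to the rank. If it did have a large kernel, it would be trivially distinguishable from the injective matrix. The Dao--Jain guarantees (injective versus lossy) hold only on the sparse domain $T=\{\bx:|\bx|\le 2t\}$.

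The missing idea is a sparse source. In the paper's notation (input length $m$, output length $q$), prepare each input qubit as $\sqrt{1-p}\ket{0}+\sqrt{p}\ket{1}$ with $p=t/m$. This is a parameter-dependent $R_y$ rotation instead of a Hadamard, and it makes $X\sim\Bern(t/m)^{\otimes m}$ concentrate on $T$. You then need an argument turning the guarantees on $T$ into a Shannon-entropy gap:
\begin{itemize}
\item Condition on the indicator of $|X|>2t$, whose probability is $\eta\le e^{-t/3}$ by Chernoff.
\item This gives $H(\mathbf{M}^{\mathrm{high}}X)\ge H(X)-h(\eta)-\eta m$ and $H(\mathbf{M}^{\mathrm{low}}X)\le h(\eta)+\ell/\Gamma+\eta q$, where $\ell=\log_2|T|$.
\item Since $H(X)=m\,h(t/m)$ and $\ell$ are both $\Theta(t\log(m/t))$, a large enough constant $\Gamma$ yields a gap $\Omega(\ell)=\omega(\log n)$.
\end{itemize}

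Once the input is biased, your justification of the additive correction (``linear and perfect, hence uniform over a coset'') also stops working. You must show $H(\hat f(X;R))=|\br|+H(\mathbf{M}X)$ for arbitrary $X$. This follows from deterministic decoding, perfect privacy, and injectivity of $\br\mapsto\hat f(\bx;\br)$ for each fixed $\bx$, so that the simulator's output has entropy exactly $|\br|$. You never state or check that last property.

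Your two-stage encoding (AIK parity chains plus randomized copy chains) is otherwise in the spirit of the paper's $\bw/\bz$ encoding. Note that what makes the 2D embedding work is its row-and-column structure; bounded degree alone does not give constant-depth 2D routing.
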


The subset $W_n$ specifies the cut across which the entanglement is measured and need not be a connected region of the lattice.
The theorem above rests on a separate \emph{randomized encoding} statement that may be of independent interest.

A randomized encoding of a function $f$ is a different function, $\hat{f},$ such that each output $f(x)$ of the original function is associated to $\hat{f}(x, r),$ where $r$ is a random string~\cite{FOCS:IshKus00}. An efficient decoder can recover $f(x)$ given $\hat{f}(x, r),$ but no other information about $x$ is revealed. Essentially, the encoding allows one to represent a computation using randomness, so that the encoded version preserves the output of the original computation while hiding everything else about the input. Crucially, randomized encodings can be much easier to compute than the original computation. It is this fact which is key to our result.

\begin{theorem}[Bounded-fan-in, bounded-fan-out randomized encoding of dense linear maps]
    \label{thm:reintro}
    For every matrix $\mathbf{M}\in\F_2^{q\times m}$, there is an explicit perfect randomized encoding
    \[
        \mathsf{RE}_{\mathbf{M}}(\bx;\br,\bs)=(\bw,\bz)
    \]
    of the map $\bx\mapsto \mathbf{M}\bx$ with the following properties:
    \begin{enumerate}
        \item \bfm{Perfect decodability and privacy.}
        There is a deterministic decoder that recovers $\mathbf{M}\bx$ from $\mathsf{RE}_{\mathbf{M}}(\bx;\br,\bs)$, and there is an efficient simulator whose output distribution on input $\mathbf{M}\bx$ is exactly the distribution of $\mathsf{RE}_{\mathbf{M}}(\bx;R,S)$ for uniform masks $R$ and $S$.
        \item \bfm{Bounded fan-in and bounded fan-out.}
        Every output bit depends on at most three source bits, and every source bit influences at most three output bits.
        \item \bfm{Exact entropy formula.}
        For every random variable $X$ over $\bits^m$, if $R\leftarrow\bits^{mq}$ and $S\leftarrow\bits^{q(m-1)}$ are uniform and independent of $X$, then
        \[
            H(\mathsf{RE}_{\mathbf{M}}(X;R,S))
            =
            mq+q(m-1)+H(\mathbf{M}X).
        \]
    \end{enumerate}
\end{theorem}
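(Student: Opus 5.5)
We would build the encoding row by row, using the classical Applebaum--Ishai--Kushilevitz style encoding of a sum of bits. Fix a row $i\in[q]$ of $\mathbf{M}$; its output bit is $y_i=\sum_{j=1}^m M_{ij}x_j$.

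\textbf{Breaking the fan-out.} Each $x_j$ appears in up to $q$ rows, so we first mask it once per row. For each row $i$ we create the $m$ terms
\[
  t_{ij}=M_{ij}x_j+r_{ij},
\]
using $mq$ masks $r_{ij}$. We do not publish the $t_{ij}$ directly. Instead we publish the sum-encoding of $y_i=\sum_j t_{ij}+\sum_j r_{ij}$ as a telescoping chain. With fresh masks $s_{i1},\dots,s_{i,m-1}$ and the conventions $s_{i0}=s_{im}=0$, the outputs are
\[
  w_{ij}=M_{ij}x_j+s_{i,j-1}+s_{ij}\qquad\text{and}\qquad z_{ij}=r_{ij}.
\]
As written, this fails the fan-out condition: each $x_j$ still influences $q$ outputs.

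\textbf{Fixing the fan-out of $x_j$.} We route $x_j$ through a masked copy chain across rows. Set
\[
  z_{1j}=x_j+r_{1j},\qquad z_{ij}=r_{i-1,j}+r_{ij}\ \ (i\ge 2).
\]
Then $x_j+r_{ij}=\sum_{k\le i}z_{kj}$ is decodable, and each $r_{ij}$ influences at most two $z$'s. Row $i$ uses $M_{ij}(x_j+r_{ij})$, so we need the row chain to absorb the $r_{ij}$ correction. The cleanest fix is to define the row-$i$ chain output
\[
  w_{ij}=M_{ij}r_{ij}+s_{i,j-1}+s_{ij}.
\]
The decoder then computes $\sum_j w_{ij}=\sum_j M_{ij}r_{ij}$, recovers each $x_j+r_{ij}$ from the $z$-chain, and obtains $y_i=\sum_j M_{ij}(x_j+r_{ij})+\sum_j w_{ij}$. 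This needs care: the decoder sums $z$'s, so it is a parity of $\poly$ many outputs. That is fine, since only the encoder must be local.

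\textbf{Locality check.} Fan-in: $w_{ij}$ depends on $r_{ij},s_{i,j-1},s_{ij}$, and $z_{ij}$ on at most two bits, so every output depends on at most three source bits. Fan-out: $x_j$ influences only $z_{1j}$. Each $r_{ij}$ influences $z_{ij}$, $z_{i+1,j}$ and $w_{ij}$, which is three outputs. Each $s_{ij}$ influences $w_{ij}$ and $w_{i,j+1}$, which is two. The counts match the statement: $|\bz|=mq$ outputs with $mq$ masks $R$, and $|\bw|=mq$ outputs with $q(m-1)$ masks $S$.

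\textbf{Privacy.} The map $(\bx,\br,\bs)\mapsto(\bw,\bz)$ is linear. For fixed $\bx$, the map from the masks to the outputs is affine. The map $\br\mapsto\bz$ is triangular and invertible, so $\bz$ is uniform on $\bits^{mq}$. Conditioned on $\bz$, $\br$ is determined, and the $\bs$-chain makes the first $m-1$ entries of each $w$-row uniform. The last entry is then forced by the row-parity constraint $\sum_j w_{ij}=y_i+\sum_j M_{ij}(x_j+r_{ij})$, whose right side is a function of $\bz$ and $y_i$ only.

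The simulator therefore samples $\bz$ uniformly, samples the first $m-1$ $w$-entries of each row uniformly, and sets the last entry by this parity using $\mathbf{M}\bx$. This gives perfect privacy. Perfect decodability is shown above.

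\textbf{Entropy formula.} We use the chain rule together with the encoding being injective in $(\mathbf{M}\bx,\text{masks})$ modulo simulation. Perfect privacy says the conditional distribution of the encoding given $\mathbf{M}X=y$ is the simulator's output $\mathsf{Sim}(y)$. That output is uniform over an affine set of size $2^{mq+q(m-1)}$, and these sets are disjoint for distinct $y$ by decodability. Hence
\[
  H(\mathsf{RE})=H(\mathbf{M}X)+mq+q(m-1).
\]

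\textbf{Main obstacle.} The hard part is getting a concrete wiring whose fan-out is at most three while the row-parity correction remains simulatable. We expect the proof to hinge on the cross-row $r$-chain just described, and on verifying that the simulator's support sets partition the output space by $y$.
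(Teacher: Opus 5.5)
Your proposal is correct and follows the paper's proof. Your encoding is the paper's up to renaming: your $z$-chain is the paper's $\bw$ with the row index shifted by one, and your $w$ is the paper's $\bz$. Your decodability, locality and simulator arguments also match the paper's; the one difference is the entropy identity, which you derive generically by conditioning on $\mathbf{M}X=y$ (each conditional law equals $\mathsf{Sim}(y)$, uniform on $2^{mq+q(m-1)}$ points with supports disjoint across $y$ by decodability), whereas the paper conditions first on $W=\bw$ and then on the parity vector $P=\mathbf{M}X\oplus\boldsymbol{\sigma}(\bw)$; your ordering of the chain rule is equally valid and shows the formula holds for any perfect encoding whose simulator output has $y$-independent entropy.
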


The last item is what we leverage for pseudoentanglement: the randomized encoding adds only $\mathbf{M}$-independent mask entropy, so the low-versus-high entropy gap of the underlying linear map is preserved exactly.
As a corollary, the randomized encoding also yields a bounded-fan-in, bounded-fan-out collision-resistant hash from a random-code bounded syndrome-decoding/bounded shortest-vector assumption; see \appref{app:crh}.

We can use the construction from \Cref{thm:mainintro} to obtain hardness-of-learning statements for the entanglement structure of geometrically local Hamiltonians. Starting with the 2D case, given a circuit $\calC$, define
\[
    H_{\calC}=\sum_{i=1}^n \calC\ket{1}\!\bra{1}_i\calC^\dagger,
\]
where $\ket{1}\!\bra{1}_i$ denotes a projector onto the $\ket{1}$ state for qubit $i$, while acting as identity on all other qubits.
This Hamiltonian is unitarily equivalent to $\sum_i\ket{1}\!\bra{1}_i$, so its unique ground state is $\calC\ket{0^n}$ and its spectral gap is $1$. If $\calC$ is 2D-local and has constant depth, every term in $H_{\calC}$ has constant support on the lattice. As this is exactly the type of circuit we have from \Cref{thm:mainintro}, we get

\begin{theorem}[2D Hamiltonian entanglement learning]
    \label{thm:hamintro}
    Assuming the intractability of Dense-Sparse \textsf{LPN} for QPT adversaries, for all sufficiently large $n>0$ there exists an efficiently samplable distribution over tuples
    \[
        (H_n^{low},H_n^{high},W_n),
    \]
    where each $H_n^b$ is a 2D $k$-local Hamiltonian on $n$ qubits for some constant $k>0$ and $W_n\subseteq[n]$, such that:
    \begin{enumerate}
        \item \bfm{Ground-state structure.} With overwhelming probability over the sampled tuple, each Hamiltonian $H_n^b$ is frustration-free, has a unique ground state $\ket{\phi_n^b}$, and has spectral gap $1$.
        \item \bfm{Entanglement entropy difference.} There exists a function $\Delta:\N\to\mathbb{R}_{\ge 1}$ with $\Delta(n)=\omega(\log n)$ such that, with overwhelming probability over the sampled tuple,
        \[
            \Se{\Ptr{W_n}{\ket{\phi_n^{high}}\bra{\phi_n^{high}}}}
            \ge
            \Se{\Ptr{W_n}{\ket{\phi_n^{low}}\bra{\phi_n^{low}}}}
            +
            \Delta(n).
        \]
        \item \bfm{Indistinguishability.} For every QPT algorithm $\calA$,
        \[
            \left|
                \prob{1\leftarrow \calA[H_n^{high},W_n]}
                -
                \prob{1\leftarrow \calA[H_n^{low},W_n]}
            \right|
            =
            \negl(n).
        \]
    \end{enumerate}
    Consequently, the task of learning the entanglement entropy to within constant additive error, across the cut $W_n$ is quantumly hard for 2D constant-gap local Hamiltonians.
\end{theorem}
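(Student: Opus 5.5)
The plan is a direct reduction to \Cref{thm:mainintro}. Sample $(\calC_n^{low},\calC_n^{high},W_n)$ from the distribution given there, and output the tuple $(H_n^{low},H_n^{high},W_n)$ with
\[
H_n^b \defeq H_{\calC_n^b}=\sum_{i=1}^n \calC_n^b\ket{1}\!\bra{1}_i(\calC_n^b)^\dagger .
\]
Given the circuit, each term is computed efficiently by light-cone propagation, so this distribution is efficiently samplable.

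\textbf{Step 1: locality.} Suppose $\calC_n^b$ has constant depth $d$ and consists of 2D-nearest-neighbour gates of bounded fan-in. In the conjugation $\calC\ket{1}\!\bra{1}_i\calC^\dagger$, every gate outside the backward light cone of qubit $i$ cancels against its inverse. That light cone has size at most $2^d=O(1)$ and lies in a lattice ball of radius $O(d)$ around $i$. Hence each term is a $k$-local operator with $k=O(1)$ and geometrically local support, so $H_n^b$ is a 2D $k$-local Hamiltonian.

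\textbf{Step 2: spectrum.} Since $H_n^b=\calC_n^b\bigl(\sum_i\ket{1}\!\bra{1}_i\bigr)(\calC_n^b)^\dagger$, its spectrum equals that of $\sum_i\ket{1}\!\bra{1}_i$, namely the Hamming weights $\{0,1,\dots,n\}$. The ground energy $0$ is attained uniquely by $\ket{\phi_n^b}=\calC_n^b\ket{0^n}$, and the gap is $1$. Each term is a projector annihilating $\ket{\phi_n^b}$, so $H_n^b$ is frustration-free. These facts hold deterministically for every sampled tuple.

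\textbf{Step 3: entanglement gap.} Since $\ket{\phi_n^b}=\ket{\psi_n^b}$, item 2 follows verbatim from item 1 of \Cref{thm:mainintro}, with the same $\Delta(n)=\omega(\log n)$ and the same overwhelming probability.

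\textbf{Step 4: indistinguishability.} Suppose a QPT $\calA$ distinguishes $(H_n^{high},W_n)$ from $(H_n^{low},W_n)$. Build $\calA'$ that, on input $(\calC,W)$, computes $H_{\calC}$ efficiently (Step 1) and runs $\calA$. Then $\calA'$ has the same advantage against item 2 of \Cref{thm:mainintro}, a contradiction.

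\textbf{Learning consequence.} Suppose an algorithm estimates the cut entropy within any constant additive error $\epsilon$ given the Hamiltonian (and, if allowed, copies of the ground state, which can be prepared efficiently from the known circuit; equivalently, by adiabatic or other efficient preparation). Since $\Delta(n)=\omega(\log n)>2\epsilon$ for large $n$, thresholding its estimate distinguishes the two cases with non-negligible advantage, contradicting Step 4.

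The main obstacle is Step 1: making the locality claim precise. The conjugated terms must really have constant-size support on the 2D lattice given the gate model of \Cref{thm:mainintro}, and their classical descriptions must be computable in polynomial time for the reduction. Both follow from the standard light-cone argument once the circuit's gates are confirmed to be bounded fan-in and geometrically local. Everything else is a direct transfer of \Cref{thm:mainintro}.
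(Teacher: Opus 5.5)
Your proposal matches the paper's proof essentially step for step: the conjugated-projector Hamiltonian $H_{\calC}$, unitary equivalence for frustration-freeness, the unique ground state and gap $1$, a light-cone argument for 2D locality, the entropy gap inherited directly from \Cref{thm:mainintro}, and a deterministic compiler reduction for indistinguishability. One small fix in your learning-consequence paragraph: if the learner is also given ground-state copies, phrase the reduction against item~2 of \Cref{thm:mainintro} rather than your Step~4, because the copies are prepared from the circuit $\calC$ and not from $H$ alone (the adiabatic remark is unjustified), and the distinguishing threshold can be taken between the explicit bounds $mq+q(m-1)+\tfrac{2}{\Gamma}\ell$ and $mq+q(m-1)+\tfrac{\alpha}{2}\ell$ from that theorem's proof.
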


We can also obtain a similar result for 1D local Hamiltonians. Since the circuits constructing the pseudoentangled states are not 1D local, we instead use a standard 1D history-state construction.

\begin{theorem}[1D Hamiltonian entanglement learning]
    \label{thm:ham1dintro}
    Assuming the intractability of Dense-Sparse \textsf{LPN} for QPT adversaries, for all sufficiently large $n>0$ there exists an efficiently samplable distribution over tuples
    \[
        (H_n^{low},H_n^{high},W_n),
    \]
    where each $H_n^b$ is a 1D $k$-local Hamiltonian on $n$ qubits for some constant $k>0$ and $W_n\subseteq[n]$, such that:
    \begin{enumerate}
        \item \bfm{Ground-state structure.} With overwhelming probability over the sampled tuple, each Hamiltonian $H_n^b$ is frustration-free, has a unique ground state $\ket{\phi_n^b}$, and has spectral gap at least $1/\poly(n)$.
        \item \bfm{Entanglement entropy difference.} There exists a function $\Delta:\N\to\mathbb{R}_{> 0}$ with $\Delta(n)\ge 1/2$ such that, with overwhelming probability over the sampled tuple,
        \[
            \Se{\Ptr{W_n}{\ket{\phi_n^{high}}\bra{\phi_n^{high}}}}
            \ge
            \Se{\Ptr{W_n}{\ket{\phi_n^{low}}\bra{\phi_n^{low}}}}
            +
            \Delta(n).
        \]
        \item \bfm{Indistinguishability.} For every QPT algorithm $\calA$,
        \[
            \left|
                \prob{1\leftarrow \calA[H_n^{high},W_n]}
                -
                \prob{1\leftarrow \calA[H_n^{low},W_n]}
            \right|
            =
            \negl(n).
        \]
    \end{enumerate}
    Consequently, the task of learning the entanglement entropy to within constant additive error, across the cut $W_n$ is quantumly hard for 1D local Hamiltonians.
\end{theorem}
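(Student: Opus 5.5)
Our plan is to take the pseudoentangled circuits $\calC_n^{b}$ from \Cref{thm:mainintro} and embed them into a 1D Feynman--Kitaev history-state Hamiltonian that is geometrically local, frustration-free, and has a unique ground state. We will choose the construction so that the entanglement of the history state across a suitable cut $W_n$ is controlled by the entanglement of $\ket{\psi_n^b}$ across the original cut.

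\emph{Step 1: serialize the circuit.} First we rewrite $\calC_n^b$, a constant-depth 2D circuit, as a sequence of $T=\poly(n)$ nearest-neighbour gates on a line. Any layout of the 2D qubits on a line works, with SWAP gates making interactions adjacent. The resulting sequence $U_T\cdots U_1$ still maps $\ket{0^n}$ to $\ket{\psi_n^b}$. We then append SWAPs that route the qubits of $W_n$ to one contiguous block and the complement to another. The cut thus becomes a geometric cut of the line, which is convenient but not essential.

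\emph{Step 2: build the history state and Hamiltonian.} We use a standard 1D construction with a clock encoded locally in the particle states, e.g.\ the Aharonov--Gottesman--Irani--Kempe or a qudit domain-wall variant, yielding a frustration-free $k$-local 1D Hamiltonian with history state
\[
\ket{\eta}=\tfrac{1}{\sqrt{T+1}}\sum_t \ket{t}\otimes U_t\cdots U_1\ket{0^n}.
\]
Input penalties force $\ket{0^n}$, making the ground state unique, and the gap is $1/\poly(n)$ by the known Kitaev/AGIK analysis. To make the final state dominate, we pad with $L=\poly(n)\cdot T$ idle identity steps after the circuit. Then $\ket{\eta}$ is $O(\sqrt{T/L})$-close to $\ket{\text{clock}}\otimes\ket{\psi_n^b}$, where $\ket{\text{clock}}$ is a uniform superposition over the idle clock values. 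This closeness needs care, since in 1D constructions the clock and data share sites.

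\emph{Step 3: entropy accounting.} This is the main obstacle. The clock register itself carries entanglement across the cut: for a unary/domain-wall clock, the bipartite entropy of a uniform superposition over $L$ positions is at most $O(1)$ bits ($\le 1$ for a single domain wall crossing the cut). We define $W_n$ to include the appropriate physical sites, so that the reduced state on $W_n$ is close to $\sigma_{\text{clock}}\otimes\rho_n^b$ up to small error. By subadditivity and Araki--Lieb, $S$ differs from $S[\rho_n^b]$ by at most the clock contribution $c=O(1)$, and the clock contribution is identical for $b=low,high$. Fannes--Audenaert continuity then converts the $O(\sqrt{T/L})$ trace-distance error into an entropy error $\epsilon\cdot\poly(n)$, made $\le 1/4$ by choosing $L$ a large enough polynomial. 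Since \Cref{thm:mainintro} gives a gap $\Delta=\omega(\log n)\ge 1$, after losing at most twice the clock correction plus continuity slack we retain at least $1/2$. If the clock correction threatens this, we instead place the clock's domain wall entirely within one side during the padded phase, e.g.\ by idling with the clock moving only on sites outside $W_n$, so that its contribution vanishes.

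\emph{Step 4: indistinguishability.} The map $(\calC_n^b,W_n)\mapsto(H_n^b,W_n)$ is efficient and public. Any QPT distinguisher for the Hamiltonians therefore yields one for the circuits, contradicting item 2 of \Cref{thm:mainintro}.
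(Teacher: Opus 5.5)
Your architecture (serialize, pad with identity steps, 1D Feynman--Kitaev, deterministic compiler for indistinguishability) is the same as the paper's. However, the continuity step in Step~3 does not close as written.

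The Fannes--Audenaert error is $\epsilon\log_2 d$, where $d$ is the dimension of the side you apply it to. Because you let the clock straddle the cut, that side contains clock sites. A unary/domain-wall clock crossing the cut generically leaves $\Theta(L)$ of its sites on each side, and an AGIK chain gains a block of sites per padding round. With $\epsilon=O(\sqrt{T/L})$ the bound is then $\Omega(\sqrt{TL})$, so enlarging $L$ makes it worse. Your ``$\epsilon\cdot\poly(n)$'' hides a polynomial that grows with $L$.

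The missing idea is the paper's choice of cut: put the \emph{entire} clock, together with the $W$ work qubits, on the traced-out side. The other side is then only the at most $n$ remaining work qubits. Their reduced state is exactly $\gamma\rho_\psi+(1-\gamma)\sigma$, with $\rho_\psi=\Ptr{W}{\ket{\psi}\bra{\psi}}$ and $\gamma=(L+1)/(T+L+1)$. Consequently:
\begin{itemize}
\item the trace distance to $\rho_\psi$ is at most $2(1-\gamma)=O(T/L)$;
\item Fannes costs only $\delta n+h(\delta)$, with $n$ independent of the padding;
\item the clock contributes nothing, so your $O(1)$ clock bookkeeping and the Araki--Lieb step become unnecessary.
\end{itemize}
Alternatively, your straddling version can be rescued. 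Both reduced states are supported on a subspace of dimension at most $(T+L+1)2^n$, since the history state has only $T+L+1$ clock branches, which gives error $O(\sqrt{T/L}\,(n+\log L))$. Your fallback of keeping the clock on one side is exactly the paper's device. It should be the main line, with continuity applied on the clock-free side.

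The point you flag but leave open is also not automatic: that the ground state is close to clock$\otimes$data on a \emph{fixed} set of sites. In AGIK-type chains the data qubits physically advance along the line every round, including during identity padding. So the clock is not a separate register, and a fixed physical cut need not keep the $W$ qubits on one side. The paper's proof passes over this by asserting that the compiled ground state is the padded history state ``up to a fixed local encoding.''

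The cleanest justification is again your fallback. After the circuit finishes, freeze the data in the last block and let only a separate counter advance on sites on the clock side of the cut. Then during the padded phase the work qubits sit on fixed sites, and the $\gamma\rho_\psi+(1-\gamma)\sigma$ computation applies verbatim.

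Finally, the compiled Hamiltonians live on $\poly(n)$ qubits. You still need the paper's re-indexing step (pinning extra qubits outside the cut) to cover every sufficiently large $n$.
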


\subsection{Overview of techniques}
\label{subsec:overview}

The starting point is the Dense-Sparse \textsf{LPN} lossy-function pair of Dao and Jain~\cite{dao2024lossy}. Let $\bA\in\F_2^{q/2\times m}$, let $\bS\in\F_2^{q/2\times q/2}$ (the ``secret'' matrix), let $\bE\in\F_2^{q/2\times m}$ be sparse (the ``error'' matrix), and let $\bC\in\F_2^{q/2\times m}$ be a compressed-sensing matrix. On sparse inputs $\bx\in\bits^m$, satisfying $|\bx| \leq t,$ define
\[
    F^{low}(\bx)=(\bA\bx,(\bS\bA\oplus\bE)\bx),
    \qquad
    F^{high}(\bx)=(\bA\bx,(\bS\bA\oplus\bE\oplus\bC)\bx).
\]
For suitable parameters, $F^{low}$ is many-to-one, while $F^{high}$ is injective with overwhelming probability, and the two public descriptions are computationally indistinguishable. The injective branch is certified by the trapdoor $\bS$: combining the two output blocks reveals $\bC\bx\oplus\bE\bx$, and the sparse-recovery property of the compressed-sending matrix $\bC$ recovers $\bx$ because $\bE\bx$ remains sparse when $\bx$ is sparse.
This means that evaluating $F^b$ on a random variable $X\sim\Bern(t/m)^{\otimes m}$ will result in an output whose entropy is high in the injective case and low in the many-to-one case.\footnote{We use boldface notation for vectors and matrices, and nonbold uppercase letters for random variables.}

Let $\mathbf{M}^b$ be the row-stacked matrix defining $F^b$, i.e.
\[
    \mathbf{M}^{\mathrm{low}}
    =
    \begin{bmatrix}
        \bA\\
        \bS \bA \oplus \bE
    \end{bmatrix},
    \qquad
    \mathbf{M}^{\mathrm{high}}
    =
    \begin{bmatrix}
        \bA\\
        \bS \bA \oplus \bE \oplus \bC
    \end{bmatrix}
\]
Consider the state
\[
    \sum_{\bx\in\bits^m}\sqrt{\prob{X=\bx}}\ket{\bx}\ket{\mathbf{M}^b\bx}.
\]
It has entanglement entropy exactly $H(\mathbf{M}^bX)$ across the input-output cut. The Dense-Sparse \textsf{LPN} lossy-function guarantees imply that
\[
    H(\mathbf{M}^{high}X)-H(\mathbf{M}^{low}X)=\Omega(\ell),
\]
where $\ell$ is the logarithm of the number of inputs of Hamming weight at most $2t$.\footnote{The threshold $2t$ comes from the typical event for the product source $X\sim\Bern(t/m)^{\otimes m}$: since $\mathbb{E}[|X|]=t$, the proof conditions on $|X|\le 2t$, which holds with overwhelming probability, and applies the injective/lossy guarantees on that sparse domain. See \Cref{sec:qubitproof} for the full argument.}

The problem is that the above state is not one that can be prepared in constant depth. This is because each output bit of $\mathbf{M}^b\bx$ is a parity of many input bits, and each input bit participates in many such parities. A naive coherent implementation would therefore require logarithmic depth or unbounded fan-out. However, the main observation is that \emph{we do not need to compute $\mathbf{M}^b\bx$ explicitly}. It is enough to sample a randomized encoding of it, and to do so in a way that preserves entropy exactly.

Let us now describe a randomized encoding for $\bx \mapsto \mathbf{M}^b \bx$ that can be computed in constant depth with bounded fan-in and bounded fan-out.
For each input bit $\bx_j$, we introduce uniform mask bits $\br_{j,1},\ldots,\br_{j,q}$ and output the bits
\[
    \bw_{j,0}=\bx_j\oplus \br_{j,1},
    \qquad
    \bw_{j,i}=\br_{j,i}\oplus \br_{j,i+1},
\]
with $i \in [q-1].$ Also, for each row $i$, we introduce the uniform masks $\bs_{i,1},\ldots,\bs_{i,m-1}$ with boundary convention $\bs_{i,0}=\bs_{i,m}=0$, and output
\[
    \bz_{i,j}=\bs_{i,j-1}\oplus \bs_{i,j}\oplus \mathbf{M}_{i,j}^b\br_{j,i}.
\]
Notice that every output bit depends on at most three source bits, and every source bit feeds only constantly many outputs. The global decoder recovers $\mathbf{M}^b\bx$ by taking row parities of the $\bz$-strings and correcting them using the $\bw$-strings. More importantly for us, if $R$ and $S$ are uniform masks, then
\[
    H\bigl(\mathsf{RE}_{\mathbf{M}^b}(X;R,S)\bigr)
    =
    mq+q(m-1)+H(\mathbf{M}^bX).
\]
The first two terms are just the mask entropy and are identical in the low and high branches. Hence the entropy difference is exactly $H(\mathbf{M}^{high}X)-H(\mathbf{M}^{low}X)=\Omega(\ell).$

We can now get a quantum circuit for preparing pseudoentangled states, by simply applying the randomized encoding circuit coherently on a collection of single-qubit states. Specifically, each $X_j$ becomes
\[
    \sqrt{1-t/m}\ket{0}+\sqrt{t/m}\ket{1},
\]
all mask bits become $\ket{+}$ states, and we initialize the encoding-output qubits to $\ket{0}$. The circuit for computing the randomized encoding then consists of a sequence of CNOT gates. Tracing out the source registers (i.e. the $X_j$ qubits) leaves a diagonal density matrix whose diagonal is the classical distribution of the randomized encoding. Therefore the entanglement entropy across the source-output cut is the Shannon entropy of that encoding.

The construction is 2D-local because it overlays two one-dimensional dependency patterns in a row-and-column layout, using a constant-size plaquette for each matrix position $(i,j)$. For each fixed column $j$, the bits $\bw_{j,0},\ldots,\bw_{j,q-1}$ are computed by a vertical nearest-neighbor chain as the row index $i$ varies. For each fixed row $i$, the bits $\bz_{i,1},\ldots,\bz_{i,m}$ are computed by horizontal nearest-neighbor plaquettes as the column index $j$ varies. The low and high branches differ only in which local CNOT from $\br_{j,i}$ to $\bz_{i,j}$ is included, according to the public matrix entry $\mathbf{M}_{i,j}^b$. We give a high-level illustration of the circuit in~\cref{fig:2d-coherent-re}, where the gray highlight marks one local plaquette; \cref{fig:local-plaquette} expands that plaquette into a planar constant-size patch.

For the Hamiltonian applications, the 2D result follows by the conjugated-projector construction described above. The 1D result instead serializes the 2D circuit into a polynomial-length 1D nearest-neighbor circuit, appends a polynomial number of identity gates, and applies a standard 1D Feynman--Kitaev history-state construction. During the padding segment the work register is the desired output state. By choosing the padding long enough, the reduced density matrix of the history state across the inherited cut is close in trace distance to the reduced density matrix of the circuit output. Entropy continuity then preserves the fixed-cut additive gap, while the history-state Hamiltonian remains 1D local, frustration-free, and inverse-polynomially gapped.

For the formal details, \Cref{sec:prelim} collects the information-theoretic and cryptographic preliminaries, \Cref{sec:retheorem} proves the randomized-encoding theorem, \Cref{sec:qubitproof} proves the pseudoentanglement theorem, \Cref{sec:hamapp} proves the Hamiltonian consequences, \appref{app:pmghz} gives an alternative construction of pseudoentangled states from \emph{poor man's GHZ states}, and \appref{app:crh} gives a construction of a constant-local collision-resistant hash based on the random-code binary shortest-vector problems $\mathsf{bSVP}$.

\begin{figure}[t]
\centering

\definecolor{oiSkyBlue}{RGB}{86,180,233}
\definecolor{oiSkyBlueText}{RGB}{0,90,135}
\definecolor{oiVermillion}{RGB}{213,94,0}
\definecolor{oiVermillionText}{RGB}{135,45,0}
\definecolor{oiPurple}{RGB}{204,121,167}
\definecolor{oiPurpleText}{RGB}{125,55,98}

\begin{tikzpicture}[
    x=2.55cm, y=0.86cm,
    box/.style={rectangle, rounded corners=1pt, draw, minimum width=10mm,
        minimum height=7mm, inner sep=1pt, font=\small},
    inbox/.style={box, fill=oiSkyBlue!28, text=oiSkyBlueText},
    randr/.style={box, fill=oiVermillion!22, text=oiVermillionText},
    rands/.style={box, fill=oiVermillion!22, text=oiVermillionText},
    outw/.style={box, fill=oiPurple!24, text=oiPurpleText},
    outz/.style={box, fill=oiPurple!24, text=oiPurpleText},
    vwire/.style={-,semithick,black},
    hwire/.style={-,semithick,black},
    mwire/.style={-,semithick,black},
    plaquette/.style={draw=black, dashed, fill=black!6, rounded corners=2pt,
        inner sep=4pt},
]

\def\m{4}\def\q{3}

\def\rowstep{4.2}
\def\topgap{0.0}
\newcommand{\Yr}[1]{{-\rowstep*#1-\topgap}}
\newcommand{\Ys}[1]{{-\rowstep*#1-\topgap-1.4}}
\newcommand{\Yw}[1]{{-\rowstep*#1-\topgap-2.8}}

\foreach \j in {1,...,\m}{
  \foreach \i in {1,...,\q}{
    \node[randr] (r\j\i) at (\j, \Yr{\i}) {$\mathbf{r}_{\j,\i}$};
  }
}

\foreach \j in {1,...,\m}{
  \node[inbox] (x\j) at (\j, {-\rowstep+2.8}) {$\mathbf{x}_{\j}$};
}

\foreach \i in {1,...,\q}{
  \node[rands] (s\i0) at (0.5, \Ys{\i}) {$0$};
  \foreach \j in {1,...,\m}{
    \ifnum\j<\m
      \node[rands] (s\i\j) at (\j+0.5, \Ys{\i}) {$\mathbf{s}_{\i,\j}$};
    \else
      \node[rands] (s\i\j) at (\j+0.5, \Ys{\i}) {$0$};
    \fi
  }
}

\foreach \j in {1,...,\m}{
  \node[outw] (w\j0) at (\j, {-\rowstep+1.4}) {$\mathbf{w}_{\j,0}$};
  \draw[vwire] (x\j) -- (w\j0);
  \draw[vwire] (w\j0) -- (r\j1);
  \foreach \i in {1,...,\q}{
    \ifnum\i<\q
      \pgfmathtruncatemacro{\ip}{\i+1}
      \node[outw] (w\j\i) at (\j, \Yw{\i}) {$\mathbf{w}_{\j,\i}$};
      \draw[vwire] (r\j\i) -- (w\j\i);
      \draw[vwire] (w\j\i) -- (r\j\ip);
    \fi
  }
}

\foreach \i in {1,...,\q}{
  \foreach \j in {1,...,\m}{
    \pgfmathtruncatemacro{\jm}{\j-1}
    \node[outz] (z\i\j) at (\j, \Ys{\i}) {$\mathbf{z}_{\i,\j}$};
    \draw[hwire] (s\i\jm) -- (z\i\j);
    \draw[hwire] (z\i\j) -- (s\i\j);
    \draw[mwire] (r\j\i) -- (z\i\j);
  }
}

\begin{scope}[on background layer]
  \node[plaquette, fit=(r22)(z22)(w22)] {};
\end{scope}

\draw[->,thick] (0.55,{-\rowstep+3.55}) -- (4.1,{-\rowstep+3.55});
\node[font=\small,anchor=south west] at (0.55,{-\rowstep+3.65}) {columns $j\in[m]$};
\draw[->,thick] (-0.05,{-\rowstep+1.6}) -- (-0.05,\Yr{3});
\node[font=\small,rotate=90,anchor=south] at (-0.18,\Yr{2}) {rows $i\in[q]$};

\begin{scope}[shift={(0.55,-16.0)}]
  \node[inbox] at (0,0) {$\mathbf{x}$};
  \node[anchor=west,font=\scriptsize] at (0.22,0)
    {inputs $\mathbf{x}_j$};
  \node[randr] at (1.45,0) {$\mathbf{r}$};
  \node[anchor=west,font=\scriptsize] at (1.67,0)
    {masks $\mathbf{r},\mathbf{s},0$};
  \node[outw] at (3.00,0) {$\mathbf{w}$};
  \node[anchor=west,font=\scriptsize] at (3.22,0)
    {outputs $\mathbf{w},\mathbf{z}$};
\end{scope}

\end{tikzpicture}

\caption{
Schematic 2D layout of the randomized encoding with
four input columns and three representative rows. Sky-blue boxes are input
bits, vermillion boxes are mask bits, and purple boxes are
outputs. The
$\bw$-outputs are computed by column-local nearest-neighbor gadgets:
$\bw_{j,0}$ depends only on $\bx_j$ and $\br_{j,1}$, while $\bw_{j,1}$
and $\bw_{j,2}$ depend only on adjacent $\br$-masks in the same column.
The $\bz$-outputs are computed inside constant-size plaquettes:
$\bz_{i,j}$ depends only on the two neighboring horizontal masks
$\bs_{i,j-1}$ and $\bs_{i,j}$, and, when $\mathbf{M}_{i,j}=1$, the local
vertical mask $\br_{j,i}$. The vertical wires through the $\bw$-boxes should
be read as connections to the adjacent $\br$-masks above and below them, not
as connections to the $\bz$-boxes in the horizontal lane; each $\br_{j,i}$
feeds both the local $\bz_{i,j}$ box and the neighboring $\bw$ boxes in its
column. The gray region highlights one constant-size plaquette: in the
actual 2D circuit, this local patch may contain a constant number of ancillas
so that the variables $\br_{j,i}$, $\bz_{i,j}$, and the neighboring $\bw$
boxes can be coupled within the plaquette using only nearest-neighbor gates.
Each horizontal mask $\bs_{i,j}$ is shared by the two adjacent plaquettes in
row $i$, with boundary masks $\bs_{i,0}=\bs_{i,m}=0$.
}
\label{fig:2d-coherent-re}
\end{figure}

\begin{figure}[t]
\centering

\definecolor{oiVermillion}{RGB}{213,94,0}
\definecolor{oiVermillionText}{RGB}{135,45,0}
\definecolor{oiPurple}{RGB}{204,121,167}
\definecolor{oiPurpleText}{RGB}{125,55,98}

\begin{tikzpicture}[
    x=1cm, y=1cm,
    box/.style={rectangle, rounded corners=1pt, draw, minimum width=9mm,
        minimum height=6mm, inner sep=1pt, font=\small},
    mask/.style={box, fill=oiVermillion!22, text=oiVermillionText},
    output/.style={box, fill=oiPurple!24, text=oiPurpleText},
    anc/.style={box, fill=black!8, text=black!65},
    wire/.style={-, semithick, black},
    note/.style={->, semithick, black},
    plaquette/.style={draw=black, dashed, fill=black!6, rounded corners=2pt}
]

\node[plaquette, minimum width=4.7cm, minimum height=4.6cm] (P) at (0,0) {};

\coordinate (topout) at (-0.75,2.95);
\coordinate (topin) at (-0.75,2.30);
\coordinate (botin) at (-0.75,-2.30);
\coordinate (botout) at (-0.75,-2.95);
\coordinate (leftout) at (-3.20,0);
\coordinate (leftin) at (-2.85,0);
\coordinate (rightin) at (2.85,0);
\coordinate (rightout) at (3.20,0);

\node[mask] (r) at (-0.75,1.25) {$\mathbf{r}_{j,i}$};
\node[output] (w) at (-0.75,-1.25) {$\mathbf{w}_{j,i}$};
\node[output] (z) at (0.95,0) {$\mathbf{z}_{i,j}$};
\node[anc] (ax) at (-0.75,0) {$\mathbf{a}_{j,i}$};

\draw[wire] (topout) -- (topin) -- (r);
\draw[wire] (w) -- (botin) -- (botout);
\draw[wire] (leftout) -- (leftin) -- (ax) -- (z);
\draw[wire] (z) -- (rightin) -- (rightout);

\draw[wire] (r) -- (ax) -- (w);
\draw[wire] (r) -- (z);

\node[font=\scriptsize, align=center, anchor=south] at (topout)
  {to row $i-1$\\ vertical gadget};
\node[font=\scriptsize, align=center, anchor=north] at (botout)
  {to $\mathbf{r}_{j,i+1}$\\ below};
\node[font=\scriptsize, anchor=east] at (leftout)
  {$\mathbf{s}_{i,j-1}$};
\node[font=\scriptsize, anchor=west] at (rightout)
  {$\mathbf{s}_{i,j}$};
\node[font=\scriptsize, align=left, anchor=west] (pathnote) at (2.75,1.58)
  {include this path\\ iff $\mathbf{M}_{i,j}=1$};
\draw[note] (pathnote.west) -- (0.38,0.75);

\end{tikzpicture}

\caption{
Local implementation of one plaquette. The four external ports carry
$\br_{j,i}$ from above, $\br_{j,i+1}$ to the row below, and the two
horizontal masks $\bs_{i,j-1}$ and $\bs_{i,j}$. The plaquette computes
$\bw_{j,i}=\br_{j,i}\oplus\br_{j,i+1}$ and
$\bz_{i,j}=\bs_{i,j-1}\oplus\bs_{i,j}\oplus
\mathbf{M}_{i,j}\br_{j,i}$.
The gray register $\mathbf{a}_{j,i}$ is only local workspace, initialized to
$0$ and uncomputed to $0$ after use. One possible local schedule is as
follows. First copy $\br_{j,i}$ into the workspace, apply it to
$\bw_{j,i}$, and uncompute the workspace:
$\br_{j,i}\to\mathbf{a}_{j,i}\to\bw_{j,i}$ followed by
$\br_{j,i}\to\mathbf{a}_{j,i}$. Then apply the incoming
$\br_{j,i+1}$ from the bottom port to $\bw_{j,i}$. Next reuse the same
workspace for the horizontal mask: copy $\bs_{i,j-1}$ into
$\mathbf{a}_{j,i}$, apply it to $\bz_{i,j}$, and uncompute it; then apply
$\bs_{i,j}$ directly to $\bz_{i,j}$. Finally, include the direct
$\br_{j,i}\to\bz_{i,j}$ CNOT exactly when $\mathbf{M}_{i,j}=1$.
}
\label{fig:local-plaquette}
\end{figure}

\clearpage

\subsection{Related work}
\label{subsec:related}

The notion of pseudoentanglement was introduced by Aaronson~\etal~\cite{ITCS:ABFGVZ24} taking inspiration from~\cite{gheorghiu}. Their construction is based on quantum pseudorandom states and yields states with entropy on the order of $\log n$ across every cut, while remaining computationally indistinguishable from states with near-maximal entanglement entropy. Our work is complementary to theirs. We obtain a weaker fixed-cut guarantee, but in a setting that is explicitly \emph{not} pseudorandom: the preparing circuits are public and 2D-local constant-depth. In view of the learnability result of Huang \etal~\cite{shallowlearning}, this gives a sharp separation between pseudorandomness and pseudoentanglement. Another recent direction, due to Cheng, Feng, and Ippoliti~\cite{ChengFI25}, uses tensor-network constructions to realize more flexible pseudoentanglement patterns, including holographic examples; our emphasis here is instead on public-key hardness from post-quantum assumptions and on low-depth local circuit preparation.

A direct precursor to the present paper is the work of Gheorghiu and Hoban~\cite{gheorghiu}, which showed hardness of entropy estimation for log-depth circuits and for constant-depth circuits with bounded fan-in and unbounded fan-out gates, under \textsf{LWE}. That construction also used randomized encodings, but while those encodings could be computed in constant depth, they used unbounded fan-out gates, whereas our result here has bounded fan-out. It is unclear whether the bounded fan-out randomized encoding we used could also be applied to non-binary linear maps, allowing us to base hardness on \textsf{LWE} rather than Dense-Sparse \textsf{LPN}.

Bouland \etal~\cite{publickey} introduced public-key pseudoentanglement and used it to prove hardness of learning ground-state entanglement structure from Hamiltonian descriptions. Their construction established that pseudoentanglement can be useful in the computational-complexity model, where the input is a classical description of a circuit or Hamiltonian rather than copies of a state. Our main theorem can be viewed as a shallow-circuit analogue of this line of work: we retain the public-key flavor, but now with 2D-local constant-depth state preparation. Closely related in motivation is the recent work of Akers \etal~\cite{AkersBCKMV24}, which uses a pseudoentanglement-style hiding phenomenon to argue for the potential hardness of geometry reconstruction in the AdS/CFT dictionary.

The recent work of Bouland, Zhang, and Zhou~\cite{groundstatehardness} strengthened the Hamiltonian application by proving hardness for geometrically local Hamiltonians with a near-area-law versus near-volume-law promise. Their 1D result is based on factoring-type assumptions, while their 2D result is based on \textsf{LWE}. Our Hamiltonian consequences are weaker in the geometric promise, since they concern only a specified cut inherited from the circuit construction. However, the 2D consequence has constant gap, and the 1D consequence gives post-quantum fixed-cut hardness from a code-based assumption.

There has also been rapid recent progress on low-depth pseudorandom unitaries. Schuster, Haferkamp, and Huang~\cite{SchusterHH24} construct PRUs in depth $\polylog(n)$ in one-dimensional geometries and depth $\poly(\log\log n)$ in all-to-all architectures using ordinary bounded-fan-in unitary circuits. Foxman, Parham, Vasconcelos, and Yuen~\cite{FoxmanPVY26} show that PRUs can be realized in constant-time models such as \QACzf and equivalent intermediate-measurement/feedforward models. It is known, however, that both PRSs and PRUs cannot be achieved in \QNCz, as those circuits are efficiently learnable~\cite{shallowlearning}.

Our randomized-encoding construction sits in the broader low-complexity cryptography program where randomized encodings are used to reduce locality and obtain primitives such as low-depth, local one-way functions, pseudorandom generators and simple hash functions~\cite{Applebaum14,JC:AppIshKus18,STOC:Applebaum12,FOCS:Applebaum17,EPRINT:Applebaum17}. To that end, in \appref{app:crh} we give, as a corollary of \Cref{thm:reintro}, a constant-local collision-resistant hash function based on a random-code variant of $\mathsf{bSVP}$. Previous constructions of such functions were based on \textsf{bSVP} for sparse random matrices (i.e. LDPC codes), rather than dense matrices. The catch, however, is that our hash function achieves much weaker compression compared to previous works.

\section{Discussion and open questions}
\label{sec:discussion}

As mentioned, our construction of constant-depth pseudoentangled states gives a separation between pseudoentanglement and pseudorandomness in the shallow-circuit regime. Pseudorandomness is impossible for \QNCz state-preparation circuits because such circuits are efficiently learnable, whereas we show that pseudoentanglement can still be achieved. The key mechanism is a constant-locality randomized encoding for binary linear maps, which may also be useful in classical cryptographic settings. However, several limitations and open questions remain.

The main limitation is that the hidden cut in our construction is fixed and non-geometric. It is determined by the input-output structure of the randomized encoding rather than by a connected spatial region of the lattice. As a result, the construction cannot be used directly to prove hardness for learning entanglement structure in more physically motivated settings, such as the geometric cuts that arise in recent quantum-gravity-inspired work. An interesting open problem is whether the ideas used here can be adapted to give constant-depth constructions of states with ``geometric pseudoentanglement,'' closer to the setting of~\cite{groundstatehardness}.

There are also general limitations on what such a construction could achieve. In constant depth, one can compute the entropy of any subset of $O(\log n)$ output qubits in polynomial time. Indeed, the reduced density matrix on such a subset has dimension $2^{O(\log n)}=\poly(n)$, so it can be written down explicitly, and a light-cone argument shows that for a depth-$d$ bounded-fan-in circuit it depends on only $O(2^d\log n)$ input qubits, which is still $O(\log n)$ for constant $d$. Moreover, if the underlying assumption, such as Dense-Sparse \textsf{LPN}, is assumed to be secure against $2^{o(n)}$-time algorithms, then one cannot expect hardness for cuts of size $o(n)$. Effectively, fixed-depth hardness should only be possible for cuts of linear size. It would therefore be interesting to find a construction that is tight in this sense: one for which computing the von Neumann entropy is hard for all linear-size cuts.

Another direction is to look for applications beyond entropy estimation. Schuster~\etal recently showed that recognizing phases of matter is intractable by leveraging low-depth PRUs~\cite{SchusterKYH26}. Their result does not apply to constant-locality Hamiltonians, because PRUs need a growing light cone in order to hide information. Pseudoentangled states avoid this particular obstruction: they hide only an entanglement property, not the entire state. This raises the question of whether pseudoentanglement could help prove hardness of recognizing phases of matter for constant-locality Hamiltonians. There is also a related possibility. Since constant-depth circuits do not change the phase of a state, our circuit-output states are all trivial in the usual condensed-matter sense. Nevertheless, the construction might still be useful for hiding witnesses of phase equivalence, and hence for showing that deciding whether two succinctly described states or Hamiltonians lie in the same phase can be computationally hard.

Our proof appears to be specific to von Neumann entropy, or equivalently to Shannon entropy after the source register is traced out. The Dense-Sparse \textsf{LPN} analysis compares the two branches only on the typical sparse event $|X|\le 2t$, whose complement has negligible probability. This tail is harmless for Shannon entropy, however for other R\'enyi orders, negligible-probability events can matter in qualitatively different ways. For example, the max-entropy $H_0$, which is the logarithm of the support size, depends on the full support of the distribution, so rare nonsparse inputs can make the low branch have large rank entropy even though it is lossy on the sparse set. Conversely, large R\'enyi orders, including min-entropy $H_{\infty}$, are controlled by the largest point probabilities rather than by the typical support size. Thus the injective-versus-lossy guarantee supplied by the lossy-function construction does not by itself give a robust R\'enyi entropy gap. Extending the result to other entropies would likely require a source distribution or cryptographic primitive that controls the entire output spectrum, not just its Shannon entropy on a typical set.

Finally, the construction is not strictly over a fixed finite gate set: the initial single-qubit rotations depend on the parameter $t/n$. It would be interesting to know whether our results can be recovered by constant-depth circuits built from a fixed, constant-size gate set. Additionally, it is also worth determining whether such a construction is possible with 1D local circuits (even when allowing arbitrarily parametrized gates).

\subsection*{Acknowledgements and AI statement}
I am grateful to my good friend Matty Hoban for many insightful discussions on this topic.
I have made extensive use of OpenAI's Codex, Anthropic's Claude Code as well as ChatGPT 5.5 Extended Pro in writing this manuscript. The original idea for constant depth pseudoentanglement was mine; I specifically came up with the construction in~\cref{app:pmghz} as well as the applications to hardness of learning entanglement structure of local Hamiltonian ground states. However, after prompting ChatGPT 5.5 about whether the construction could be simplified, it came up with the randomized encoding scheme that became the main result, as it is simpler than the Poor man's GHZ state construction in~\cref{app:pmghz} and also has an independent cryptographic application. I then used the AI models to aid in writing all sections of the paper and in checking correctness of the results. I have also independently re-derived and checked the proofs myself.

\section{Preliminaries}
\label{sec:prelim}

We follow the notation from~\cite{nielsenchuang}. All Hilbert spaces in this paper are finite-dimensional over $\C$. Throughout, $[m] \defeq \{1,\ldots,m\}$, and all vector and matrix additions and products are over $\F_2$ unless explicitly stated otherwise. Bold lowercase letters such as $\bx,\by,\bz,\bw$ denote vectors or bit strings, while bold uppercase letters such as $\bA,\bC,\bS,\bE$ denote matrices. For a bit string $\bx \in \bits^m$, the quantity $|\bx|$ denotes its Hamming weight; for a set $S$, the same notation $|S|$ denotes its cardinality. We write $X \sim \Bern(p)$ for a Bernoulli random variable with $\prob{X=1}=p$ and $\prob{X=0}=1-p$, and $X \sim \Bern(p)^{\otimes m}$ for the product distribution of $m$ independent $\Bern(p)$ coordinates. Note that for random variables and sets we uses non-bolded capital letters. Finally, $\Unif(S)$ denotes the uniform distribution over a finite set $S$.

\subsection{Classical and quantum information}
\label{subsec:qi}

For a classical random variable $X$ distributed over a finite set $\Omega$, we write
\[
    H(X) = - \sum_{x \in \Omega} \prob{X=x} \log_2 \prob{X=x}
\]
for its Shannon entropy. We also write
\[
    H_\alpha(X)
    =
    \frac{1}{1-\alpha}\log_2 \sum_{x \in \Omega} \prob{X=x}^\alpha
\]
for its classical R\'enyi entropy when $\alpha \in (0,\infty)\setminus\{1\}$, with the standard conventions
\[
    H_0(X)=\log_2 |\mathrm{supp}(X)|
    \qquad\text{and}\qquad
    H_\infty(X)=-\log_2 \max_{x \in \Omega} \prob{X=x}.
\]
We also write
\[
    h(p) = -p \log_2 p - (1-p)\log_2(1-p)
\]
for the binary entropy function. For a density matrix $\rho$, we write
\[
    \Se{\rho} = -\Tr{\rho \log_2 \rho}
\]
for its von Neumann entropy. For a bipartite pure state $\ket{\psi}_{AB}$, the entanglement entropy across the cut $A:B$ is
\[
    \Se{\Ptr{A}{\ket{\psi}\bra{\psi}}}
    =
    \Se{\Ptr{B}{\ket{\psi}\bra{\psi}}}.
\]
For $\alpha \in (0,\infty)\setminus\{1\}$, we also write
\[
    \mathsf{S}_\alpha(\rho)
    =
    \frac{1}{1-\alpha}\log_2 \Tr{\rho^\alpha}
\]
for the quantum R\'enyi entropy of $\rho$. We use the standard conventions $\mathsf{S}_0(\rho)=\log_2 \mathrm{rank}(\rho)$ and $\mathsf{S}_\infty(\rho)=-\log_2 \|\rho\|_\infty$, and the fact that $\mathsf{S}_\alpha(\rho)$ is nonincreasing in $\alpha$. Both the von Neumann entropy and the R\'enyi entropies are invariant under local unitaries on either side of the cut.

\begin{proposition}[Classical coherent states]
    \label{prop:coherententropy}
    Let $X$ be a classical random variable over a finite set $\Omega$, let $f : \Omega \to \bits^r$, and define
    \[
        \ket{\Phi_f} = \sum_{x \in \Omega} \sqrt{\prob{X=x}} \ket{x}\ket{f(x)}.
    \]
    Then
    \[
        \Ptr{\mathrm{first}}{\ket{\Phi_f}\bra{\Phi_f}}
        =
        \sum_{y \in \bits^r} \prob{f(X)=y}\ket{y}\bra{y}.
    \]
    Consequently, for every $\alpha \in [0,\infty]$,
    \[
        \mathsf{S}_\alpha\!\left(\Ptr{\mathrm{first}}{\ket{\Phi_f}\bra{\Phi_f}}\right)
        =
        H_\alpha(f(X)).
    \]
    In particular, the entanglement entropy of $\ket{\Phi_f}$ across the cut separating the two registers is exactly $H(f(X))$.
\end{proposition}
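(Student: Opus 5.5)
The plan is to compute the partial trace directly in the computational basis and then read off the entropies from the eigenvalues. Writing $p_x = \prob{X=x}$, I expand
\[
    \ket{\Phi_f}\bra{\Phi_f} = \sum_{x,x'\in\Omega}\sqrt{p_x p_{x'}}\,\ket{x}\bra{x'}\otimes\ket{f(x)}\bra{f(x')}.
\]
Tracing out the first register uses $\Tr{\ket{x}\bra{x'}}=\langle x'|x\rangle=\delta_{x,x'}$, which holds because the states $\ket{x}$ for $x\in\Omega$ are orthonormal. This kills all off-diagonal terms and leaves $\sum_{x} p_x \ket{f(x)}\bra{f(x)}$. Grouping the $x$'s by their image $y=f(x)$ gives $\sum_{y\in\bits^r}\bigl(\sum_{x\in f^{-1}(y)} p_x\bigr)\ket{y}\bra{y}$, and the inner sum is exactly $\prob{f(X)=y}$. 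This proves the first claim.

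For the entropy statement, the reduced state is diagonal in the orthonormal basis $\{\ket{y}\}$. Its eigenvalues are therefore precisely the probabilities $q_y=\prob{f(X)=y}$, with zero eigenvalues padding out the rest of the spectrum. I then check each order against the definitions in \Cref{subsec:qi}:
\begin{itemize}
    \item For $\alpha\in(0,\infty)\setminus\{1\}$, $\Tr{\rho^\alpha}=\sum_y q_y^\alpha$, where zero eigenvalues contribute nothing.
    \item For $\alpha=0$, the rank equals the number of $y$ with $q_y>0$, which is $|\mathrm{supp}(f(X))|$.
    \item For $\alpha=\infty$, the operator norm is $\max_y q_y$.
    \item For $\alpha=1$, $-\Tr{\rho\log_2\rho}=-\sum_y q_y\log_2 q_y$, with the convention $0\log 0=0$, which is $H(f(X))$.
\end{itemize}
So $\mathsf{S}_\alpha = H_\alpha(f(X))$ in every case.

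The final sentence follows because for a pure bipartite state the entanglement entropy is the von Neumann entropy of either reduced state, as recalled above. There is no real obstacle here. The only point needing care is the orthonormality of the first-register basis, which is what makes the cross terms vanish, and the handling of zero eigenvalues in the $\alpha=0$ and $\alpha=1$ conventions.
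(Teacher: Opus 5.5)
Your proposal is correct and follows essentially the same route as the paper. The paper likewise uses orthonormality of the first-register basis to kill the cross terms, reads off a diagonal reduced state with entries $\prob{f(X)=y}$, and identifies the quantum R\'enyi entropies with the classical ones. Your order-by-order check of the $\alpha=0$, $\alpha=1$, and $\alpha=\infty$ conventions is simply a more explicit version of the paper's final step.
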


\begin{proof}
    Since the computational basis states in the first register are orthonormal, tracing out that register removes all off-diagonal terms with different first-register labels. The reduced density matrix on the second register is therefore diagonal in the computational basis, with diagonal entries equal to the probabilities of the random variable $f(X)$. Its quantum R\'enyi entropies are therefore exactly the corresponding classical R\'enyi entropies of $f(X)$, and the $\alpha=1$ case is the stated von Neumann/Shannon identity.
\end{proof}

\begin{proposition}[Chernoff bound]
    \label{prop:chernoff}
    Let $X_1,\ldots,X_m$ be independent Bernoulli random variables and let $S = \sum_{i=1}^m X_i$ with expectation $\mu = \mathbb{E}[S]$. Then
    \[
        \prob{S \geq 2\mu} \leq \exp(-\mu/3).
    \]
\end{proposition}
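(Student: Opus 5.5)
The claim is the standard multiplicative Chernoff bound with deviation parameter $\delta=1$. I will derive it with the exponential moment (Chernoff) method.

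\textbf{Step 1 (Markov on the exponential).} Let $p_i=\prob{X_i=1}$, so that $\mu=\sum_i p_i$. If $\mu=0$, then $S=0$ almost surely and the event $S\ge 0$ has probability $1=\exp(0)$, so the bound holds trivially. Assume $\mu>0$. For any $\lambda>0$, Markov's inequality gives
\[
    \prob{S\ge 2\mu}\le e^{-2\lambda\mu}\,\mathbb{E}[e^{\lambda S}].
\]

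\textbf{Step 2 (bound the moment generating function).} By independence,
\[
    \mathbb{E}[e^{\lambda S}]=\prod_i\bigl(1+p_i(e^\lambda-1)\bigr).
\]
Using $1+y\le e^y$, this gives
\[
    \mathbb{E}[e^{\lambda S}]\le \exp\bigl(\mu(e^\lambda-1)\bigr).
\]

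\textbf{Step 3 (optimize $\lambda$).} Choose $\lambda=\ln 2$. Combining the first two steps yields
\[
    \prob{S\ge 2\mu}\le \exp\bigl(\mu(1-2\ln 2)\bigr).
\]
It remains to check that $2\ln 2-1\ge 1/3$. Since $\ln 2\approx 0.693$, we have $2\ln 2-1\approx 0.386>1/3$. Hence
\[
    \exp\bigl(-\mu(2\ln 2-1)\bigr)\le \exp(-\mu/3).
\]

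There is no real obstacle. The only step needing care is this numerical comparison of constants. If a cleaner argument is preferred, one can instead verify the general inequality $(1+\delta)\ln(1+\delta)-\delta\ge \delta^2/3$ for $\delta\in[0,1]$ by differentiating twice, and then specialize to $\delta=1$.
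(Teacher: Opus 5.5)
Your argument is correct and complete. The paper states this proposition as a standard fact without proof, and your exponential-moment derivation with $\lambda=\ln 2$ is exactly the textbook proof it relies on; it in fact gives the sharper bound $(e/4)^{\mu}$. The one step you flag, $2\ln 2-1\ge 1/3$, can be made rigorous without decimals: it is equivalent to $e^{2}\le 8$, which holds since $e<2.72$.
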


\subsection{Local Hamiltonians}
\label{subsec:hamiltonians}

A 2D $k$-local Hamiltonian on $n$ qubits is an operator of the form
\[
    H=\sum_j h_j
\]
where the qubits are embedded in a rectangular subset of the square lattice and each term $h_j$ acts on at most $k$ qubits whose lattice diameter is at most $k$. A 1D $k$-local Hamiltonian is defined analogously for qubits embedded on a line: each term acts on at most $k$ qubits whose diameter along the line is at most $k$.

We say that $H$ is frustration-free if every $h_j$ is positive semidefinite and the ground energy of $H$ is $0$. The spectral gap of $H$ is
\[
    \Delta(H)=\lambda_1(H)-\lambda_0(H),
\]
where $\lambda_0(H)\le \lambda_1(H)\le \cdots$ are the eigenvalues of $H$ in nondecreasing order.

For a subset $A$ of lattice sites, write
\[
    \mathrm{Area}(A)
\]
for the number of nearest-neighbor lattice edges with one endpoint in $A$ and the other in $\overline{A}$. Following~\cite{groundstatehardness}, one may view a state as near-area-law entangled across $A$ if its entanglement entropy is $\widetilde{O}(\mathrm{Area}(A))$, and as near-volume-law entangled across $A$ if its entanglement entropy is $\widetilde{\Omega}(\min\{|A|,|\overline{A}|\})$.

\begin{definition}[Fixed-cut ground-state entanglement learning]
    \label{def:lghes}
    Let $a,b:\N\to\mathbb{R}_{\ge 0}$ satisfy $a(n)<b(n)$ for all sufficiently large $n$. The problem $\mathsf{LGHES}_{a,b}$ is the following promise problem:
    \begin{itemize}
        \item \bfm{Input.} A classical description of a geometrically local Hamiltonian $H$ on $n$ qubits, together with a subset $W\subseteq[n]$.
        \item \bfm{Promise.} The Hamiltonian $H$ has a unique ground state $\ket{\phi_H}$, and either
        \[
            \Se{\Ptr{W}{\ket{\phi_H}\bra{\phi_H}}}\le a(n)
        \]
        or
        \[
            \Se{\Ptr{W}{\ket{\phi_H}\bra{\phi_H}}}\ge b(n).
        \]
        \item \bfm{Task.} Decide which of the two cases holds.
    \end{itemize}
    This is the fixed-cut version of the ground-state entanglement learning task studied in~\cite{groundstatehardness}. The latter work focuses on the stronger near-area-law versus near-volume-law promise for geometrically local cuts.
\end{definition}

\subsection{Cryptography}
\label{subsec:crypto}

A function $\mu : \N \to \mathbb{R}_{\geq 0}$ is \emph{negligible} if for every constant $c > 0$ there exists $N_c$ such that $\mu(\lambda) \leq \lambda^{-c}$ for all $\lambda \geq N_c$. We write $\negl(\lambda)$ for an unspecified negligible function of the security parameter $\lambda$.

\begin{definition}[Fan-in and fan-out]
    \label{def:faninout}
    Let $F:\bits^a\to\bits^b$ be a Boolean map, and write
    \[
        F=(F_1,\ldots,F_b).
    \]
    The \emph{fan-in} of the output coordinate $F_i$ is the number of input coordinates on which $F_i$ depends. The \emph{fan-out} of an input coordinate is the number of output coordinates that depend on it. We say that $F$ has fan-in at most $k$ and fan-out at most $d$ if every output coordinate has fan-in at most $k$ and every input coordinate has fan-out at most $d$. These quantities are \emph{bounded} if $k,d=O(1)$, independent of the input length, and \emph{unbounded} if they are allowed to grow with the input length. For circuit descriptions, we also use the standard gate-level meaning: fan-in is the number of incoming wires to a gate, and fan-out is the number of outgoing wires or targets to which a gate output is sent; this is illustrated for a gate $G$ in~\cref{fig:fanin-fanout}.
\end{definition}

\begin{figure}[h!]
\centering
\begin{tikzpicture}[
    x=1cm,y=1cm,
    wire/.style={-,semithick,black},
    gate/.style={rectangle,draw,fill=black!6,minimum width=1.5cm,
        minimum height=1.0cm,font=\small},
    brace/.style={decorate,decoration={brace,amplitude=4pt},semithick}
]
    \node[gate] (g) at (0,0) {$G$};

    \foreach \y/\lab in {0.75/1,0.35/2,-0.75/k}{
        \draw[wire] (-2.6,\y) -- (g.west);
        \node[font=\scriptsize,anchor=east] at (-2.7,\y) {$x_{\lab}$};
    }
    \node[font=\scriptsize] at (-2.6,-0.18) {$\vdots$};

    \foreach \y/\lab in {0.55/1,0.18/2,-0.55/d}{
        \draw[wire] (g.east) -- (2.6,\y);
        \node[font=\scriptsize,anchor=west] at (2.7,\y) {$y_{\lab}$};
    }
    \node[font=\scriptsize] at (2.6,-0.18) {$\vdots$};

    \draw[brace] (-3.35,-0.9) -- (-3.35,0.9)
        node[midway,left=6pt,font=\scriptsize] {fan-in};
    \draw[brace] (3.35,0.7) -- (3.35,-0.7)
        node[midway,right=6pt,font=\scriptsize] {fan-out};
\end{tikzpicture}
\caption{
Gate-level fan-in and fan-out. In the example shown, the gate has fan-in $k$
and fan-out $d$. These quantities are bounded when $k,d=O(1)$, independent of
the total problem size.
}
\label{fig:fanin-fanout}
\end{figure}

\begin{definition}[Randomized encoding]
    \label{def:randenc}
    Let $f:\bits^m\to\bits^\ell$. A randomized encoding of $f$ consists of an encoding map
    \[
        \mathsf{Enc}:\bits^m\times\bits^r\to\bits^s
    \]
    together with algorithms $\mathsf{Dec}$ and $\mathsf{Sim}$ such that:
    \begin{enumerate}
        \item \bfm{Correctness.} For every $\bx\in\bits^m$ and every randomness string $\rho\in\bits^r$,
        \[
            \mathsf{Dec}(\mathsf{Enc}(\bx;\rho))=f(\bx).
        \]
        \item \bfm{Privacy.} If $R\leftarrow\bits^r$ is uniform, then the distribution $\mathsf{Enc}(\bx;R)$ can be simulated from $f(\bx)$ alone. Concretely, $\mathsf{Sim}(f(\bx))$ is required to be computationally, statistically, or perfectly indistinguishable from $\mathsf{Enc}(\bx;R)$, depending on the notion under consideration.
    \end{enumerate}
    We say the randomized encoding is \emph{perfect} if these two distributions are exactly identical for every input $\bx$.
\end{definition}

\begin{definition}[Dense-Sparse \textsf{LPN} public-key distributions]
    \label{def:densesparselpn}
    Fix integers $m,q,t > 0$ with $q$ even. Let
    \[
        \bA \leftarrow \Unif(\F_2^{q/2 \times m}),
        \qquad
        \bS \leftarrow \Unif(\F_2^{q/2 \times q/2}),
    \]
    let $\bE \in \F_2^{q/2 \times m}$ be drawn from an efficiently samplable sparse-noise distribution, and let $\bC \in \F_2^{q/2 \times m}$ be a compressed sensing matrix for $t$-sparse vectors. Concretely, one should think of $\bC$ as a matrix for which a $t$-sparse vector $\bx$ can still be recovered from $\bC \bx$ even after adding the admissible sparse perturbation $\bE \bx$; in particular, $\bC$ is injective on $t$-sparse vectors. Define
    \[
        \mathbf{B}^{low} = \bS \bA \oplus \bE,
        \qquad
        \mathbf{B}^{high} = \bS \bA \oplus \bE \oplus \bC.
    \]
    The corresponding Dense-Sparse \textsf{LPN} public descriptions are the pairs
    \[
        (\bA,\mathbf{B}^{low})
        \qquad\text{and}\qquad
        (\bA,\mathbf{B}^{high}).
    \]
\end{definition}

\begin{conjecture}[Dense-Sparse \textsf{LPN} assumption]
    \label{conj:densesparselpn}
    For suitable polynomially bounded choices of the parameters and matrix distributions in \Cref{def:densesparselpn}, the two public-key distributions $(\bA,\mathbf{B}^{low})$ and $(\bA,\mathbf{B}^{high})$ are computationally indistinguishable for QPT adversaries. In~\cite{dao2024lossy}, for example, the assumption is instantiated with $m = \Theta(\lambda^2)$, $q = \Theta(\lambda^2)$, $t = \Theta(\lambda)$, and each row of $\bE$ having Hamming weight $O(\sqrt{\lambda})$.
\end{conjecture}

\begin{definition}[Sparse-input lossy trapdoor functions]
    \label{def:ltdf}
    Fix a family of input sets $T_\lambda \subseteq \bits^{m(\lambda)}$ and a constant $\Gamma > 1$. A sparse-input lossy trapdoor function family with lossiness factor $\Gamma$ consists of polynomial-time algorithms
    \[
        \mathsf{Gen}^{high}(1^\lambda)\to(pk,td),
        \qquad
        \mathsf{Gen}^{low}(1^\lambda)\to pk,
        \qquad
        \mathsf{Eval}(pk,\cdot),
    \]
    such that, except with negligible probability over key generation, the following hold:
    \begin{enumerate}
        \item \bfm{Injective mode.} If $(pk,td)\leftarrow \mathsf{Gen}^{high}(1^\lambda)$, then $\mathsf{Eval}(pk,\cdot)$ is injective on $T_\lambda$, and the trapdoor $td$ allows efficient inversion on $\mathsf{Eval}(pk,T_\lambda)$.
        \item \bfm{Lossy mode.} If $pk \leftarrow \mathsf{Gen}^{low}(1^\lambda)$, then
        \[
            |\mathsf{Eval}(pk,T_\lambda)| \leq |T_\lambda|^{1/\Gamma}.
        \]
        \item \bfm{Indistinguishability.} The public keys produced by $\mathsf{Gen}^{high}$ and $\mathsf{Gen}^{low}$ are computationally indistinguishable for QPT adversaries.
    \end{enumerate}
\end{definition}

In the Dao--Jain construction~\cite{dao2024lossy}, the public key is a matrix pair $(\bA,\mathbf{B})$, and the evaluation map is
\[
    \mathsf{Eval}_{(\bA,\mathbf{B})}(\bx) = (\bA\bx,\mathbf{B}\bx).
\]
The hidden matrix $\bS$ plays the role of the trapdoor. In the high branch, combining the two output blocks with $\bS$ reveals $\bC\bx \oplus \bE\bx$, and the sparse-recovery property of $\bC$ then allows one to reconstruct the sparse input $\bx$ despite the sparse noise term. In the low branch, the map remains lossy on sparse inputs. For the purposes of this paper, the relevant sparse domain is
\[
    T_\lambda = \{ \bx \in \bits^{m(\lambda)} : |\bx| \leq 2t(\lambda) \},
\]
and the precise consequence we use later is summarized in the following lemma.

\begin{lemma}[Dense-Sparse \textsf{LPN} lossy pair]
    \label{lem:densesparsepair}
    Assume Dense-Sparse \textsf{LPN} is hard for QPT adversaries. Then for every
    constant $\Gamma > 1$ there exist polynomials $m(\lambda)$, $q(\lambda)$,
    $t(\lambda)$ with $t(\lambda)=\omega(\log \lambda)$ and
    $t(\lambda)=o(m(\lambda))$, together with efficiently samplable jointly
    distributed matrices
    \[
        (\bA_\lambda,
        \mathbf{B}_\lambda^{\mathrm{low}},
        \mathbf{B}_\lambda^{\mathrm{high}}),
    \]
    where $\bA_\lambda \in \F_2^{q(\lambda)/2 \times m(\lambda)}$ and
    $\mathbf{B}_\lambda^{\mathrm{low}},
    \mathbf{B}_\lambda^{\mathrm{high}}
    \in \F_2^{q(\lambda)/2 \times m(\lambda)}$, such that the maps
    \[
        F_\lambda^{\mathrm{low}}(\bx)
        =
        (\bA_\lambda\bx,\mathbf{B}_\lambda^{\mathrm{low}}\bx),
        \qquad
        F_\lambda^{\mathrm{high}}(\bx)
        =
        (\bA_\lambda\bx,\mathbf{B}_\lambda^{\mathrm{high}}\bx)
    \]
    satisfy the following for the set
    \[
        T_\lambda
        =
        \{\bx\in\bits^{m(\lambda)}: |\bx|\le 2t(\lambda)\},
        \qquad
        \ell_\lambda=\log_2 |T_\lambda|.
    \]
    \begin{enumerate}
        \item $F_\lambda^{\mathrm{high}}$ is injective on $T_\lambda$ with overwhelming probability.
        \item $|F_\lambda^{\mathrm{low}}(T_\lambda)|\le 2^{\ell_\lambda/\Gamma}$ with overwhelming probability.
        \item The public descriptions $(\bA_\lambda,\mathbf{B}_\lambda^{\mathrm{low}})$ and $(\bA_\lambda,\mathbf{B}_\lambda^{\mathrm{high}})$ are computationally indistinguishable for QPT adversaries.
    \end{enumerate}
\end{lemma}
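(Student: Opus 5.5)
The natural route is to instantiate \Cref{lem:densesparsepair} directly from the Dao--Jain construction of \Cref{def:densesparselpn}, under the parameter regime of \Cref{conj:densesparselpn}: $m,q=\Theta(\lambda^2)$, $t=\Theta(\lambda)$, and the rows of $\bE$ of weight $O(\sqrt\lambda)$. We set $\bA_\lambda=\bA$, $\mathbf{B}^{low}_\lambda=\bS\bA\oplus\bE$ and $\mathbf{B}^{high}_\lambda=\bS\bA\oplus\bE\oplus\bC$, all sampled jointly. Then $t=\omega(\log\lambda)$ and $t=o(m)$ hold by the parameter choice. Item 3 is exactly \Cref{conj:densesparselpn}, since $(\bA,\mathbf{B}^{low})$ and $(\bA,\mathbf{B}^{high})$ are the two public-key distributions. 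The constant $\Gamma$ is absorbed by adjusting the hidden constants in $m,q$, following how Dao and Jain tune lossiness.

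\textbf{Injectivity (item 1).} Suppose $\bx,\bx'\in T_\lambda$ satisfy $F^{high}(\bx)=F^{high}(\bx')$. Put $\bv=\bx\oplus\bx'$, which has $|\bv|\le 4t$. Then $\bA\bv=0$, and hence
\[
(\bE\oplus\bC)\bv=\mathbf{B}^{high}\bv\oplus\bS\bA\bv=0 .
\]
By construction, $\bC$ recovers sparse vectors of weight up to a constant multiple of $t$ despite the perturbation $\bE\bv$. For this we need $\bE\bv$ to be sparse: since $\bE$ is sparse and $|\bv|\le 4t$, its weight is bounded, with overwhelming probability, by a union bound over columns. The equation $\bC\bv=\bE\bv$ then says that decoding $\bC\bv\oplus\bE\bv=0$ must return both $\bv$ and $0$, which forces $\bv=0$. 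This requires instantiating $\bC$ for sparsity $4t$ rather than $t$, which changes only constants.

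\textbf{Lossiness (item 2).} Write $F^{low}(\bx)=(\bA\bx,\bS\bA\bx\oplus\bE\bx)$. Since the second block is a function of $\bA\bx$ and $\bE\bx$,
\[
|F^{low}(T_\lambda)|\le 2^{q/2}\cdot|\{\bE\bx:\bx\in T_\lambda\}|.
\]
Because $\bE$ has sparse columns, each $\bE\bx$ is a vector of weight at most $2t\cdot w$, where $w$ is the column weight. Counting such vectors gives at most $\binom{q/2}{\le 2tw}\approx 2^{2tw\log q}$. Meanwhile
\[
\ell_\lambda\approx 2t\log(m/2t)=\Theta(\lambda\log\lambda).
\]
The parameters must therefore satisfy
\[
q/2+2tw\log q\le \ell_\lambda/\Gamma .
\]

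\textbf{Main obstacle.} Taken at face value, the inequality above fails with $q=\Theta(\lambda^2)$, so the crude count is too weak. The sharper argument, following Dao--Jain, bounds the number of distinct values of $\bE\bx$ more cleverly, and also notes that the dimension of $\bA$ must be compared with $\ell$ via the relevant regime (shrinking the row count of $\bA$ relative to $\ell$). I would first try to import the lossiness statement of \cite{dao2024lossy} verbatim for the domain $T_\lambda$, rescaling $t$ to $2t$. Only if the constants do not match would I re-derive the counting bound. This parameter reconciliation is the real content; everything else is bookkeeping.
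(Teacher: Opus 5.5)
Your proposal has a genuine gap at item~2, and the problem is not that your count is crude. For the key distribution you commit to ($\bA$ uniform in $\F_2^{q/2\times m}$, with $q,m=\Theta(\lambda^2)$ and $t=\Theta(\lambda)$), item~2 is false. Here $\ell_\lambda=\Theta(t\log(m/t))=\Theta(\lambda\log\lambda)$. For each fixed pair $\bx\neq\bx'$ in $T_\lambda$ one has $\Pr[\bA\bx=\bA\bx']=2^{-q/2}$. A union bound over the at most $2^{2\ell_\lambda}$ pairs then shows that $\bA$ alone, and hence $F_\lambda^{\mathrm{low}}$, is injective on $T_\lambda$ except with probability $2^{2\ell_\lambda-q/2}=\negl(\lambda)$.

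So no sharper bound on the number of values of $\bE\bx$, and no retuning of hidden constants in $m,q$, can give $|F_\lambda^{\mathrm{low}}(T_\lambda)|\le 2^{\ell_\lambda/\Gamma}$. Nor can you import Dao--Jain's lossiness statement for this distribution, since it is false there.

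Shrinking the row count of a uniform $\bA$ does not work inside the lemma either. Both blocks have $q/2$ rows, and a uniform $\bA$ takes roughly $2^{\min(\ell_\lambda,q/2)}$ values on $T_\lambda$, so lossiness would need $q/2\lesssim\ell_\lambda/\Gamma$. Injectivity of $F_\lambda^{\mathrm{high}}$ on a set of size $2^{\ell_\lambda}$ needs $q\ge\ell_\lambda$. These conflict for every $\Gamma>2$, and the lemma must hold for all constant $\Gamma$ (the proof of \Cref{thm:mainintro} takes $\Gamma>8/\alpha$).

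What is missing is structure in the public matrix. For fixed keys, $F_\lambda^{\mathrm{low}}(\bx)$ determines and is determined by the pair $(\bA\bx,\bE\bx)$. Lossiness therefore needs a key distribution in which both $\bA\bx$ and $\bE\bx$ take few values on sparse inputs, while $(\bA,\bS\bA\oplus\bE)$ still looks random. Dao and Jain's key distribution is built to provide this: their public matrix is a dense random matrix times a sparse random matrix, not a uniform one. This is also why the uniform $\bA$ of \Cref{def:densesparselpn} cannot be read literally together with the parameters quoted in \Cref{conj:densesparselpn}.

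The paper's proof avoids all of this. It invokes Dao and Jain's lossy-trapdoor theorem as a black box: their key distribution, injectivity on sparse inputs, lossiness for an arbitrary constant $\Gamma>1$, and key indistinguishability under Dense-Sparse \textsf{LPN}. It then only rewrites the keys in the form $(\bA_\lambda,\mathbf{B}_\lambda^{b})$. You should likewise import all three items for their key distribution, with the sparsity threshold rescaled so the domain covers $T_\lambda$ as you note. Do not fix a uniform $\bA$ and borrow only the lossiness conclusion.

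Your injectivity argument is fine as a sanity check once $\bC$ decodes at the right sparsity. Applying the unique decoder directly to $\bC\bx\oplus\bE\bx=\bC\bx'\oplus\bE\bx'$ needs only sparsity $2t$ rather than $4t$.
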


\begin{proof}
    This is the two-branch specialization of the lossy trapdoor function
    construction of Dao and Jain~\cite{dao2024lossy}. Their theorem gives an
    injective mode and a lossy mode for sparse inputs, with arbitrary constant
    lossiness factor $\Gamma>1$, and indistinguishability of the two public
    keys under the Dense-Sparse \textsf{LPN} assumption. Writing the two public keys in
    the form $(\bA_\lambda,\mathbf{B}_\lambda^{\mathrm{high}})$ and
    $(\bA_\lambda,\mathbf{B}_\lambda^{\mathrm{low}})$ yields the statement
    above.
\end{proof}

\section{A bounded-fan-in, bounded-fan-out randomized encoding for dense linear maps}
\label{sec:retheorem}

Randomized encodings are a standard tool for reducing locality in low-complexity cryptography; see, for example,~\cite{Applebaum14,EPRINT:Applebaum17,JC:AppIshKus18,ICALP:AIKP15}. In this section, we give a perfect randomized encoding in the sense of \Cref{def:randenc}, with constant fan-in and fan-out in the sense of \Cref{def:faninout}. We will then use this encoding to obtain pseudoentanglement in constant depth. In addition, the same encoding has a direct cryptographic corollary: it yields a bounded-fan-in, bounded-fan-out collision-resistant hash under a random-code bounded-syndrome-decoding assumption; see \appref{app:crh}. \\

\noindent\textbf{Theorem~\ref{thm:reintro} (restated).}
\begin{itshape}
For every matrix $\mathbf{M}\in\F_2^{q\times m}$, there is an explicit perfect randomized encoding
\[
    \mathsf{RE}_{\mathbf{M}}(\bx;\br,\bs)=(\bw,\bz)
\]
of the map $\bx\mapsto \mathbf{M}\bx$ with the following properties:
\begin{enumerate}
    \item there is a deterministic decoder that recovers $\mathbf{M}\bx$ from $\mathsf{RE}_{\mathbf{M}}(\bx;\br,\bs)$, and there is an efficient simulator whose output distribution on input $\mathbf{M}\bx$ is exactly the distribution of $\mathsf{RE}_{\mathbf{M}}(\bx;R,S)$ for uniform masks $R$ and $S$;
    \item every output bit depends on at most three source bits, and every source bit influences at most three output bits;
    \item for every random variable $X$ over $\bits^m$, if $R\leftarrow\bits^{mq}$ and $S\leftarrow\bits^{q(m-1)}$ are uniform and independent of $X$, then
    \[
        H(\mathsf{RE}_{\mathbf{M}}(X;R,S))
        =
        mq+q(m-1)+H(\mathbf{M}X).
    \]
\end{enumerate}
\end{itshape}

\begin{proof}
    Formally, for an input $\bx\in\bits^m$, let
    \[
        \br=(\br_{j,i})_{j\in[m],\,i\in[q]}
        \in\bits^{mq}
    \]
    and
    \[
        \bs=(\bs_{i,j})_{i\in[q],\,j\in[m-1]}
        \in\bits^{q(m-1)}
    \]
    be uniformly random masks. We use the boundary convention
    \[
        \bs_{i,0}=\bs_{i,m}=0
        \qquad\text{for every } i\in[q].
    \]
    Define the encoding $\mathsf{RE}_{\mathbf{M}}(\bx;\br,\bs)=(\bw,\bz)$ by
    \[
        \bw_{j,0}=\bx_j\oplus \br_{j,1},
        \qquad
        \bw_{j,i}=\br_{j,i}\oplus \br_{j,i+1}
        \quad (i\in[q-1]),
    \]
    and
    \[
        \bz_{i,j}=\bs_{i,j-1}\oplus \bs_{i,j}\oplus \mathbf{M}_{i,j}\br_{j,i}
        \qquad (i\in[q],\;j\in[m]).
    \]
    Let us provide some intuition.
    The encoding uses two kinds of hiding masks, with different roles. The $\br$-masks are row-specific one-time pads for the input bits.  For each input coordinate $j$, the bits $\br_{j,1},\ldots,\br_{j,q}$ let row $i$ work with the masked value $\bx_j\oplus \br_{j,i}$.  The $\bw$-register gives a local ``difference encoding'' of these masked values: from $\bw$ one can recover $\bx_j\oplus \br_{j,i}$ for every row $i$, but not $\bx_j$ or $\br_{j,i}$ separately. This will be important when showing the privacy property.

    The $\bs$-masks have a different purpose. They hide the individual row contributions $\mathbf{M}_{i,j}\br_{j,i}$ while preserving their row parity. This is also essential for privacy. Indeed, if we instead were to set $\bz_{i,j}=\mathbf{M}_{i,j}\br_{j,i}$, then by first recovering $\bx_j\oplus \br_{j,i}$ from $\bw$ and then noting that $\bz_{i,j}=\br_{j,i}$ whenever $\mathbf{M}_{i,j}=1$, one can XOR the two in order to learn $\bx_j$.

    A fact about the $\bs$-masks that will be used in the proof is the following. For a fixed row $i$, define
    \[
        \Delta_i(\bs_{i,1},\ldots,\bs_{i,m-1})
        =
        (\bs_{i,0}\oplus \bs_{i,1},\;\bs_{i,1}\oplus \bs_{i,2},\ldots,\;\bs_{i,m-1}\oplus \bs_{i,m}),
    \]
    with $\bs_{i,0}=\bs_{i,m}=0$.  The image of $\Delta_i$ is exactly the even-parity subspace
    \[
        E_m=\left\{\bu\in\bits^m:\bigoplus_{j=1}^m \bu_j=0\right\},
    \]
    and $\Delta_i$ is a bijection from $\bits^{m-1}$ to $E_m$.  To see this, note that the parity of $\Delta_i(\bs_i)$ is zero because all internal $\bs$-variables cancel. Conversely, if $\bu\in E_m$, then the unique preimage is given by
    \[
        \bs_{i,j}=\bu_1\oplus \bu_2\oplus\cdots\oplus \bu_j,
        \qquad j\in[m-1].
    \]
    Therefore, for any fixed vector $\mathbf{b}\in\bits^m$, the random vector
    \[
        \mathbf{b}\oplus\Delta_i(S_{i,1},\ldots,S_{i,m-1})
    \]
    is uniform over the affine parity class
    \[
        \left\{\bz_i\in\bits^m:
        \bigoplus_{j=1}^m \bz_{i,j}=\bigoplus_{j=1}^m \mathbf{b}_j
        \right\}.
    \]
    This is the precise sense in which the $\bs$-masks hide everything about $\bz_i$ except its parity.

    We now show all the relevant properties of this randomized encoding.

    \smallskip
    \noindent\emph{Locality.}
    Every output bit depends on at most three source bits: $\bw_{j,0}$ depends on $\bx_j$ and $\br_{j,1}$, $\bw_{j,i}$ depends on $\br_{j,i}$ and $\br_{j,i+1}$, and $\bz_{i,j}$ depends on $\bs_{i,j-1}$, $\bs_{i,j}$, and possibly $\br_{j,i}$ according to the public coefficient $\mathbf{M}_{i,j}$. Conversely, every $\bx_j$ influences only $\bw_{j,0}$; every $\br_{j,i}$ influences at most two $\bw$-bits and one $\bz$-bit; and every $\bs_{i,j}$ influences exactly two $\bz$-bits. This proves the bounded fan-in and bounded fan-out claim.

    \smallskip
    \noindent\emph{Decodability.}
    For each $j\in[m]$ and $i\in[q]$, define
    \[
        a_{j,i}(\bw)
        =
        \bw_{j,0}\oplus \bw_{j,1}\oplus\cdots\oplus \bw_{j,i-1}.
    \]
    In a real encoding,
    \[
        a_{j,i}(\bw)=\bx_j\oplus \br_{j,i},
    \]
    because the intermediate mask bits telescope:
    \[
        (\bx_j\oplus \br_{j,1})
        \oplus(\br_{j,1}\oplus \br_{j,2})
        \oplus\cdots\oplus(\br_{j,i-1}\oplus \br_{j,i})
        =\bx_j\oplus \br_{j,i}.
    \]
    Similarly, for every row $i$,
    \[
        \bigoplus_{j=1}^m \bz_{i,j}
        =
        \bigoplus_{j=1}^m \mathbf{M}_{i,j}\br_{j,i},
    \]
    because the path masks $\bs_{i,1},\ldots,\bs_{i,m-1}$ cancel in the row parity. Therefore the decoder computes, for each row $i$,
    \[
    \begin{aligned}
        \bigoplus_{j=1}^m \bz_{i,j}
        \oplus
        \bigoplus_{j=1}^m \mathbf{M}_{i,j}a_{j,i}(\bw)
        & = \bigoplus_{j=1}^m \mathbf{M}_{i,j}\br_{j,i}
        \oplus
        \bigoplus_{j=1}^m \mathbf{M}_{i,j}(\bx_j\oplus \br_{j,i}) \\
        & = \bigoplus_{j=1}^m \mathbf{M}_{i,j}\bx_j
        =
        (\mathbf{M}\bx)_i.
    \end{aligned}
    \]
    Applying this formula to all rows deterministically recovers $\mathbf{M}\bx$ from $(\bw,\bz)$.

    \smallskip
    \noindent\emph{Perfect privacy.}
    Let $\by=\mathbf{M}\bx$.  The simulator is given $\by$ and the public matrix $\mathbf{M}$.  It proceeds as follows.
    \begin{enumerate}
        \item Sample $\bw\leftarrow\bits^{mq}$ uniformly.
        \item Compute $a_{j,i}(\bw) = \bw_{j,0}\oplus \bw_{j,1}\oplus\cdots\oplus \bw_{j,i-1}$ for every $j\in[m]$ and $i\in[q]$.
        \item For each row $i$, set the target parity
        \[
            \bp_i
            =
            \by_i\oplus\bigoplus_{j=1}^m \mathbf{M}_{i,j}a_{j,i}(\bw).
        \]
        Then sample $\bz_{i,1},\ldots,\bz_{i,m-1}$ uniformly and independently, and set
        \[
            \bz_{i,m}=\bp_i\oplus \bz_{i,1}\oplus\cdots\oplus \bz_{i,m-1}.
        \]
        When $m=1$, this means simply setting $\bz_{i,1}=\bp_i$.  This samples $\bz_i$ uniformly over all $m$-bit strings of parity $\bp_i$.
    \end{enumerate}
    The simulator outputs $(\bw,\bz)$.

    We now prove that this distribution is exactly the real one.  First, for every fixed $\bx$, the map $\br\mapsto\bw,$ in the randomized encoding, is a bijection.  Indeed, once $\bx$ and $\bw$ are fixed,
    \[
        \br_{j,i}=\bx_j\oplus a_{j,i}(\bw)
        \qquad (j\in[m],\;i\in[q]).
    \]
    Thus the real $W$ is uniform on $\bits^{mq}$ and its distribution is independent of $\bx$, matching the simulator's first step.

    Next fix $\bx$ and a value $\bw$ of $W$.  Then all $\br_{j,i}$ are fixed by the above formula.  For row $i$, define the fixed vector
    \[
        \mathbf{b}_i=(\mathbf{M}_{i,1}\br_{1,i},\ldots,\mathbf{M}_{i,m}\br_{m,i})\in\bits^m.
    \]
    The real row $Z_i$ is
    \[
        Z_i=\mathbf{b}_i\oplus\Delta_i(S_{i,1},\ldots,S_{i,m-1}).
    \]
    By the elementary fact above, this is uniform over all strings with parity
    \[
    \begin{aligned}
        \bigoplus_{j=1}^m \mathbf{M}_{i,j}\br_{j,i}
        & =
        \bigoplus_{j=1}^m \mathbf{M}_{i,j}(\bx_j\oplus a_{j,i}(\bw)) \\
        & =
        \by_i\oplus\bigoplus_{j=1}^m \mathbf{M}_{i,j}a_{j,i}(\bw)
        =\bp_i.
    \end{aligned}
    \]
    The rows are independent because the $S_{i,1},\ldots,S_{i,m-1}$ masks are independent from row to row.  Hence, conditioned on $W=\bw$, the real distribution of $Z$ is exactly the product distribution sampled by the simulator, and it depends on $\bx$ only through $\by=\mathbf{M}\bx$.  This proves perfect privacy.

    \smallskip
    \noindent\emph{Entropy identity.}
    Let $X$ be an arbitrary random variable over $\bits^m$, and let $R$ and $S$ be uniform and independent of $X$.  Write
    \[
        (W,Z)=\mathsf{RE}_{\mathbf{M}}(X;R,S).
    \]
    The same bijection $\br\mapsto\bw$ used above shows that, even when $X$ is not uniform, $W$ is uniform on $\bits^{mq}$ and independent of $X$. Therefore
    \[
        H(W)=mq.
    \]

    Fix a value $W=\bw$.  Since $W$ is independent of $X$, conditioning on $W=\bw$ does not change the distribution of $X$.  Define the row-parity random variables
    \[
        P_i=\bigoplus_{j=1}^m Z_{i,j},
        \qquad i\in[q],
    \]
    and define
    \[
        \boldsymbol{\sigma}_i(\bw)=\bigoplus_{j=1}^m \mathbf{M}_{i,j}a_{j,i}(\bw).
    \]
    From the calculation in the privacy proof,
    \[
        P_i=(\mathbf{M}X)_i\oplus\boldsymbol{\sigma}_i(\bw),
    \]
    and hence
    \[
        P=\mathbf{M}X\oplus\boldsymbol{\sigma}(\bw).
    \]
    Since $\boldsymbol{\sigma}(\bw)$ is fixed after conditioning on $W=\bw$, XORing by it is a bijection on $\bits^q$.  Therefore
    \[
        H(P\mid W=\bw)=H(\mathbf{M}X).
    \]

    Now condition further on a parity vector $P=\bp$ in the support.  Given $W=\bw$ and $P=\bp$, each row $Z_i$ is uniform over the parity class
    \[
        \left\{\bz_i\in\bits^m:
        \bigoplus_{j=1}^m \bz_{i,j}=\bp_i
        \right\},
    \]
    which has size $2^{m-1}$, and the rows are independent.  This is because, for every compatible value of $X$, the $\bs$-masks give exactly the same uniform distribution over the appropriate parity class; mixing over $X$ therefore does not change the row-internal distribution.  Thus
    \[
        H(Z\mid P=\bp, W=\bw)=q(m-1).
    \]
    Since $P$ is a deterministic function of $Z$, we have
    \[
    \begin{aligned}
        H(Z\mid W=\bw)
        &=H(P,Z\mid W=\bw) \\
        &=H(P\mid W=\bw)+H(Z\mid P,W=\bw) \\
        &=H(\mathbf{M}X)+q(m-1).
    \end{aligned}
    \]
    This holds for every $\bw$.  Averaging over $W$ and adding $H(W)=mq$ gives
    \[
        H(W,Z)=mq+q(m-1)+H(\mathbf{M}X),
    \]
    which is exactly the claimed identity.
\end{proof}

\section{Constant-depth pseudoentanglement}
\label{sec:qubitproof}

In this section, we use the randomized encoding from the previous section to construct 2D local constant-depth circuits preparing pseudoentangled states.

\vspace{0.1in}

\noindent\textbf{Theorem~\ref{thm:mainintro} (restated).}
\begin{itshape}
Assuming the intractability of Dense-Sparse \textsf{LPN} for quantum polynomial-time
(QPT) adversaries, for all sufficiently large $n > 0$ there exists an
efficiently samplable distribution over tuples
\[
    (\calC_n^{\mathrm{low}},\calC_n^{\mathrm{high}},W_n),
\]
where each $\calC_n^b$ is a 2D-local constant-depth circuit on $n$ qubits and
$W_n \subseteq [n]$, such that:
\begin{enumerate}
    \item \bfm{Additive entanglement entropy gap.}
    There exists a function $\Delta : \N \to \mathbb{R}_{\ge 1}$ such that,
    with overwhelming probability over the sampled tuple, letting
    \[
        \ket{\psi_n^{\mathrm{low}}}
        =
        \calC_n^{\mathrm{low}}\ket{0^n},
        \qquad
        \ket{\psi_n^{\mathrm{high}}}
        =
        \calC_n^{\mathrm{high}}\ket{0^n},
    \]
    and
    \[
        \rho_n^b
        =
        \Ptr{W_n}{\ket{\psi_n^b}\bra{\psi_n^b}},
        \qquad
        b \in \{\mathrm{low},\mathrm{high}\},
    \]
    one has
	    \[
	        \Se{\rho_n^{\mathrm{high}}}
	        \ge
	        \Se{\rho_n^{\mathrm{low}}} + \Delta(n).
	    \]
		    Moreover, the gap can be chosen so that $\Delta(n)=\omega(\log n)$.

    \vspace{0.5em}

		\item \bfm{Indistinguishability.}
    For every QPT algorithm $\calA$,
    \[
        \left|
            \prob{1 \leftarrow \calA[\calC_n^{\mathrm{high}},W_n]}
            -
            \prob{1 \leftarrow \calA[\calC_n^{\mathrm{low}},W_n]}
        \right|
        =
        \negl(n),
    \]
    where the probability is over the sampled tuple and the internal
    randomness of $\calA$.
\end{enumerate}
\end{itshape}

The proof below derives this scaling from the $\lambda$-indexed
Dense-Sparse \textsf{LPN} parameters.

\begin{proof}
Fix a constant $\Gamma>1$ to be chosen later, and sample the matrices from
\Cref{lem:densesparsepair} at security parameter $\lambda$. Write
\[
    m=m(\lambda),\qquad q=q(\lambda),\qquad t=t(\lambda),
    \qquad T=T_\lambda,
    \qquad \ell=\ell_\lambda.
\]
Define the row-stacked matrices
\[
    \mathbf{M}^{\mathrm{low}}
    =
    \begin{bmatrix}
        \bA_\lambda\\
        \mathbf{B}_\lambda^{\mathrm{low}}
    \end{bmatrix},
    \qquad
    \mathbf{M}^{\mathrm{high}}
    =
    \begin{bmatrix}
        \bA_\lambda\\
        \mathbf{B}_\lambda^{\mathrm{high}}
    \end{bmatrix}
    \in\F_2^{q\times m}.
\]
The injectivity and lossiness guarantees from \Cref{lem:densesparsepair} fail
only with negligible probability over the sampled matrices, so it suffices to
analyze the overwhelmingly likely event on which both guarantees hold.

Let $p=t/m$ and let
\[
    X=(X_1,\ldots,X_m)\sim\Bern(p)^{\otimes m}.
\]
For $b\in\{\mathrm{low},\mathrm{high}\}$ write
\[
    Y^b=\mathbf{M}^bX.
\]
We first recall the ideal entropy gap for these two output distributions. Let
\[
    \mathcal{G}=\{|X|\le 2t\},
    \qquad
    \eta=\prob{\neg\mathcal{G}},
\]
and let $B$ be the indicator of $\neg\mathcal{G}$. Since
$\mathbb{E}[|X|]=t$, \Cref{prop:chernoff} gives
\[
    \eta\le \exp(-t/3)=\negl(\lambda),
\]
because $t=\omega(\log\lambda)$.

For the high branch,
\[
    H(Y^{\mathrm{high}})=H(X)-H(X\mid Y^{\mathrm{high}}).
\]
Because $B$ is determined by $X$,
\[
    H(X\mid Y^{\mathrm{high}})
    \le
    H(B)+H(X\mid Y^{\mathrm{high}},B).
\]
Conditioned on $B=0$, we have $X\in T$ and
$F_\lambda^{\mathrm{high}}$ is injective on $T$, so
$H(X\mid Y^{\mathrm{high}},B=0)=0$. Conditioned on $B=1$, the variable $X$
takes values in $\bits^m$, so
$H(X\mid Y^{\mathrm{high}},B=1)\le m$. Hence
\[
    H(Y^{\mathrm{high}})
    \ge
    H(X)-h(\eta)-\eta m.
\]
For the low branch,
\[
    H(Y^{\mathrm{low}})
    \le
    H(B)+H(Y^{\mathrm{low}}\mid B).
\]
Conditioned on $B=0$, the support of $Y^{\mathrm{low}}$ is contained in
$F_\lambda^{\mathrm{low}}(T)$, so
\[
    H(Y^{\mathrm{low}}\mid B=0)
    \le
    \log_2 |F_\lambda^{\mathrm{low}}(T)|
    \le
    \ell/\Gamma.
\]
Conditioned on $B=1$, the output is a $q$-bit string, so
$H(Y^{\mathrm{low}}\mid B=1)\le q$. Thus
\[
    H(Y^{\mathrm{low}})
    \le
    h(\eta)+(1-\eta)\ell/\Gamma+\eta q.
\]
Since $p=t/m$ and $t=o(m)$,
\[
    H(X)=mh(t/m)=\Theta(t\log(m/t)),
    \qquad
    \ell=\log_2\left(\sum_{j=0}^{2t}\binom{m}{j}\right)
    =\Theta(t\log(m/t)).
\]
Therefore there is an absolute constant $\alpha>0$ such that
$H(X)\ge \alpha\ell$ for all sufficiently large $\lambda$. Using also
$h(\eta)=o(\ell)$ and $\eta q=o(\ell)$, we get
\[
    H(Y^{\mathrm{high}})\ge \frac{\alpha}{2}\ell,
    \qquad
    H(Y^{\mathrm{low}})\le \frac{2}{\Gamma}\ell.
\]
Choose $\Gamma>8/\alpha$. Then
\[
    H(Y^{\mathrm{high}})-H(Y^{\mathrm{low}})
    \ge
    \left(\frac{\alpha}{2}-\frac{2}{\Gamma}\right)\ell
    \ge
    \frac{\alpha}{4}\ell.
\]

We now coherently implement the randomized encoding from
\Cref{thm:reintro}. For branch $b$, prepare source qubits
\[
    X_1,\ldots,X_m,
    \qquad
    R_{j,i}\;(j\in[m],i\in[q]),
    \qquad
    S_{i,j}\;(i\in[q],j\in[m-1]),
\]
where each $X_j$ is initialized as
\[
    \sqrt{1-p}\ket{0}+\sqrt{p}\ket{1},
\]
and each $R_{j,i}$ and $S_{i,j}$ is initialized as $\ket{+}$. Add output
qubits for the encoding bits
\[
    W_{j,0}\;(j\in[m]),
    \qquad
    W_{j,i}\;(j\in[m],i\in[q-1]),
    \qquad
    Z_{i,j}\;(i\in[q],j\in[m]),
\]
all initialized to $\ket{0}$. Using CNOT gates, compute
\[
    (W,Z)=\mathsf{RE}_{\mathbf{M}^b}(X;R,S)
\]
into these output qubits. The circuit does not uncompute the source qubits.
The resulting state is
\begin{equation}
    \label{eq:coherent-re-state}
    \ket{\psi_\lambda^b}
    =
    \sum_{\bx,\br,\bs}
    \sqrt{\prob{X=\bx}}
    \frac{1}{\sqrt{2^{mq+q(m-1)}}}
    \ket{\bx,\br,\bs}_{\mathrm{src}}
    \ket{\mathsf{RE}_{\mathbf{M}^b}(\bx;\br,\bs)}_{\mathrm{out}}.
\end{equation}
Let
\[
    W_\lambda
    =
    \{\text{all source qubits }X,R,S\}.
\]
By \Cref{prop:coherententropy}, tracing out $W_\lambda$ leaves a diagonal
state on the output register with classical distribution
$\mathsf{RE}_{\mathbf{M}^b}(X;R,S)$. Hence item~3 of \Cref{thm:reintro} gives
\begin{equation}
    \label{eq:compiled-re-entropy}
    \Se{\Ptr{W_\lambda}{\ket{\psi_\lambda^b}\bra{\psi_\lambda^b}}}
    =
    mq+q(m-1)+H(Y^b).
\end{equation}
The additive term $mq+q(m-1)$ is branch-independent. Combining
\eqref{eq:compiled-re-entropy} with the ideal entropy bounds above yields
\[
\begin{aligned}
    &\Se{\Ptr{W_\lambda}{
        \ket{\psi_\lambda^{\mathrm{high}}}
        \bra{\psi_\lambda^{\mathrm{high}}}}}
    -
    \Se{\Ptr{W_\lambda}{
        \ket{\psi_\lambda^{\mathrm{low}}}
        \bra{\psi_\lambda^{\mathrm{low}}}}}
    \\
    &\qquad
    =
    H(Y^{\mathrm{high}})-H(Y^{\mathrm{low}})
    \ge
    \frac{\alpha}{4}\ell.
\end{aligned}
\]
Since
\[
    \ell_\lambda
    =
    \log_2\!\left(\sum_{j=0}^{2t(\lambda)}
        \binom{m(\lambda)}{j}\right)
    =
    \Theta\!\left(t(\lambda)
        \log_2\frac{m(\lambda)}{t(\lambda)}\right),
\]
the entropy gap is
\[
    \Omega\!\left(t(\lambda)
        \log_2\frac{m(\lambda)}{t(\lambda)}\right).
\]
In particular, because $t(\lambda)=\omega(\log\lambda)$ and
$t(\lambda)=o(m(\lambda))$, this gap is $\omega(\log\lambda)$.

It remains to realize the coherent encoding by nearest-neighbor gates on a
fixed 2D square-lattice template. Arrange a $q\times m$ array of constant-size
plaquettes. The plaquette at position $(i,j)$ contains the source qubits
$R_{j,i}$ and, for $j<m$, $S_{i,j}$, together with the output qubit
$Z_{i,j}$ and a constant number of routing ancillas. The $X_j$ qubit and the
output $W_{j,0}$ are placed in a constant-size cap above column $j$, while
$W_{j,i}$ for $i\in[q-1]$ is placed in a constant-size connector between the
plaquettes $(i,j)$ and $(i+1,j)$.

With this layout, the equations defining the $W$-bits are vertical nearest-
neighbor computations, and the equations defining the $Z$-bits are horizontal
nearest-neighbor computations using the path-mask convention $s_{i,0}=s_{i,m}=0$.
Each source qubit participates in at most three CNOTs, so the CNOTs can be
colored into a constant number of layers. The branch $b$ affects only whether
the CNOT from $R_{j,i}$ to $Z_{i,j}$ is included, according to the public bit
$\mathbf{M}_{i,j}^b$. The geometry, the initialization pattern, and the cut
$W_\lambda$ are the same in both branches. Thus the sampled circuits are
2D-local and constant-depth. This is illustrated in \cref{fig:2d-coherent-re} and \cref{fig:local-plaquette}.

The number of qubits is
\[
    n(\lambda)=\Theta(mq),
\]
which is polynomial in the security parameter. The only parameter-dependent
one-qubit gates are the preparations
\[
    \ket{0}\longmapsto \sqrt{1-p}\ket{0}+\sqrt{p}\ket{1},
    \qquad p=t/m,
\]
which can be implemented by
\[
    R_y(2\arcsin\sqrt{p})=R_y(\arccos(1-2p)).
\]

Finally, indistinguishability follows by a compiler argument. Let
$\mathsf{Comp}$ be the deterministic polynomial-time classical compiler that
maps a public key $(\bA,\mathbf{B})$ to the corresponding 2D-local coherent
randomized-encoding circuit together with the cut $W_\lambda$. This compiler
is identical in both branches except for the public matrix entries that decide
which CNOTs from $R_{j,i}$ to $Z_{i,j}$ are present. If a QPT algorithm
distinguished
\[
    (\calC_\lambda^{\mathrm{high}},W_\lambda)
    =
    \mathsf{Comp}(\bA_\lambda,\mathbf{B}_\lambda^{\mathrm{high}})
\]
from
\[
    (\calC_\lambda^{\mathrm{low}},W_\lambda)
    =
    \mathsf{Comp}(\bA_\lambda,\mathbf{B}_\lambda^{\mathrm{low}})
\]
with non-negligible advantage, then a QPT distinguisher for the two
Dense-Sparse \textsf{LPN} public-key distributions could compute
$\mathsf{Comp}(\bA,\mathbf{B})$ on its input public key and run that algorithm.
This contradicts \Cref{lem:densesparsepair}. Hence
\[
    \left|
        \prob{1 \leftarrow \calA[\calC_\lambda^{\mathrm{high}},W_\lambda]}
        -
        \prob{1 \leftarrow \calA[\calC_\lambda^{\mathrm{low}},W_\lambda]}
    \right|
    =
    \negl(\lambda).
\]

Since $n(\lambda)=\Theta(m(\lambda)q(\lambda))$ is polynomially bounded in
$\lambda$, the entropy-gap estimate above is $\omega(\log n(\lambda))$.
Re-indexing by the total qubit count gives \Cref{thm:mainintro}.
\end{proof}

\begin{remark}
    The coherent randomized-encoding compiler intentionally adds a large
    branch-independent entropy term $mq+q(m-1)$. This is why the theorem is
    stated as an additive entropy-gap result. The construction does not
    preserve the multiplicative entropy ratio of the ideal lossy-function
    calculation. On the other hand, the additive gap is preserved exactly, and
    the circuit uses only coherent classical XOR operations after the biased
    source qubits are prepared.
\end{remark}

\section{Hardness of learning local Hamiltonian ground-state entanglement}
\label{sec:hamapp}

We prove the two Hamiltonian consequences from the introduction. The 2D construction is the direct conjugated-projector construction. The 1D construction uses a standard padded history-state Hamiltonian.

\subsection{The 2D constant-gap construction}
\label{subsec:2dhamproof}

\noindent\textbf{Theorem~\ref{thm:hamintro} (restated).}
\begin{itshape}
Assuming the intractability of Dense-Sparse \textsf{LPN} for QPT adversaries, for all sufficiently large $n>0$ there exists an efficiently samplable distribution over tuples
\[
    (H_n^{low},H_n^{high},W_n),
\]
where each $H_n^b$ is a 2D $k$-local Hamiltonian on $n$ qubits for some constant $k>0$ and $W_n\subseteq[n]$, such that:
\begin{enumerate}
    \item with overwhelming probability over the sampled tuple, each Hamiltonian $H_n^b$ is frustration-free, has a unique ground state $\ket{\phi_n^b}$, and has spectral gap $1$;
    \item there exists a function $\Delta:\N\to\mathbb{R}_{\ge 1}$ with $\Delta(n)=\omega(\log n)$ such that, with overwhelming probability over the sampled tuple,
    \[
        \Se{\Ptr{W_n}{\ket{\phi_n^{high}}\bra{\phi_n^{high}}}}
        \ge
        \Se{\Ptr{W_n}{\ket{\phi_n^{low}}\bra{\phi_n^{low}}}}
        +
        \Delta(n);
    \]
    \item for every QPT algorithm $\calA$,
    \[
        \left|
            \prob{1\leftarrow \calA[H_n^{high},W_n]}
            -
            \prob{1\leftarrow \calA[H_n^{low},W_n]}
        \right|
        =
        \negl(n).
    \]
\end{enumerate}
Consequently, the fixed-cut ground-state entanglement learning task from \Cref{def:lghes} is quantumly hard on this family even for 2D constant-gap local Hamiltonians.
\end{itshape}

\begin{proof}
    Let the sampled circuits $\calC_n^{low},\calC_n^{high}$, the common cut $W_n$, and the states
    \[
        \ket{\psi_n^{low}}=\calC_n^{low}\ket{0^n},
        \qquad
        \ket{\psi_n^{high}}=\calC_n^{high}\ket{0^n}
    \]
    be given by \Cref{thm:mainintro}. For $b\in\{low,high\}$ define
    \begin{equation}
        \label{eq:hcdef}
        H_n^b=
        \sum_{i=1}^n h_{n,i}^b,
        \qquad
        h_{n,i}^b=
        \calC_n^b\ket{1}\!\bra{1}_i(\calC_n^b)^\dagger.
    \end{equation}
    Equivalently,
    \[
        H_n^b=
        \calC_n^b
        \left(\sum_{i=1}^n \ket{1}\!\bra{1}_i\right)
        (\calC_n^b)^\dagger.
    \]
    Each term is positive semidefinite, and
    \[
        h_{n,i}^b\ket{\psi_n^b}
        =
        \calC_n^b\ket{1}\!\bra{1}_i\ket{0^n}=0
    \]
    for every $i$, so $H_n^b$ is frustration-free and $\ket{\psi_n^b}$ is a ground state. Unitary conjugation preserves the spectrum, and $\sum_i\ket{1}\!\bra{1}_i$ has unique ground state $\ket{0^n}$ and spectral gap $1$. Hence $H_n^b$ has unique ground state
    \[
        \ket{\phi_n^b}=\ket{\psi_n^b}
    \]
    and spectral gap $1$.

    Since each circuit $\calC_n^b$ is 2D-local and constant-depth, conjugating a one-qubit projector through the circuit expands its support only to a constant-size 2D light cone. Therefore every term $h_{n,i}^b$ is supported on at most $k$ qubits of constant lattice diameter, for a constant $k$ independent of $n$.

    The entanglement claim follows immediately from \Cref{thm:mainintro}, because the ground states are exactly the circuit output states. In particular, the lower bound $\Delta(n)=\omega(\log n)$ is inherited directly from the pseudoentangled-state entropy gap in \Cref{thm:mainintro}. The indistinguishability claim follows from a deterministic polynomial-time compiler: given $(\calC,W)$, compute the constant-size light cone of each qubit and output the local terms of $H_{\calC}$ together with $W$. If a QPT algorithm distinguished $(H_n^{high},W_n)$ from $(H_n^{low},W_n)$, composing it with this compiler would distinguish $(\calC_n^{high},W_n)$ from $(\calC_n^{low},W_n)$, contradicting \Cref{thm:mainintro}.
\end{proof}

\subsection{The 1D history-state construction}
\label{subsec:1dhistoryproof}

We next prove the 1D Hamiltonian consequence. We use the following standard padded history-state fact. It is a direct application of the 1D circuit-to-Hamiltonian constructions used in geometric Hamiltonian-complexity reductions, together with identity padding; see, for example, the history-state construction used in~\cite{groundstatehardness}. Note that identity padding has been used before, for instance in the approximate quantum LDPC-code construction of Bohdanowicz, Crosson, Nirkhe, and Yuen~\cite{BCNY19}.

\begin{lemma}[Padded 1D history-state compiler]
    \label{lem:paddedhistory}
    Let $\calC$ be a polynomial-size quantum circuit on $n$ qubits, let $W\subseteq[n]$, and let $\varepsilon>0$ be inverse-polynomially bounded. There is a deterministic polynomial-time compiler that outputs a 1D local Hamiltonian $H_{\calC,\varepsilon}$ on $N=\poly(n,|\calC|,1/\varepsilon)$ qubits and a subset $W_{\calC,\varepsilon}\subseteq[N]$ such that:
    \begin{enumerate}
        \item $H_{\calC,\varepsilon}$ is frustration-free and has a unique ground state $\ket{\phi_{\calC,\varepsilon}}$;
        \item the spectral gap of $H_{\calC,\varepsilon}$ is at least $1/\poly(N)$;
        \item if $\ket{\psi}=\calC\ket{0^n}$, then
        \[
            \left|
            \Se{\Ptr{W_{\calC,\varepsilon}}{
                \ket{\phi_{\calC,\varepsilon}}
                \bra{\phi_{\calC,\varepsilon}}
            }}
            -
            \Se{\Ptr{W}{\ket{\psi}\bra{\psi}}}
            \right|
            \le
            \varepsilon.
        \]
    \end{enumerate}
\end{lemma}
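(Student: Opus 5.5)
We plan to serialize $\calC$ into a 1D nearest-neighbor circuit and pad it with identities. We then apply a standard 1D Feynman--Kitaev construction and control the fixed-cut entropy by a trace-distance estimate plus Fannes--Audenaert continuity.

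\textbf{Serialization and padding.} First we compile $\calC$ into a 1D nearest-neighbor circuit $U_1,\ldots,U_T$ on the $n$ work qubits, using SWAP networks. Each two-qubit gate then costs $O(n)$ SWAPs, so $T=\poly(n,|\calC|)$. We append SWAPs so that every logical qubit ends at its original position; this makes the cut $W$ refer to the same physical sites. We then append $L$ identity gates, giving $T'=T+L$ steps.

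\textbf{History-state Hamiltonian.} Next we take a standard 1D history-state construction with unique, frustration-free ground state
\[
\ket{\phi}=\frac{1}{\sqrt{T'+1}}\sum_{\tau=0}^{T'}\ket{\tau}_{\mathrm{clock}}\otimes U_\tau\cdots U_1\ket{0^n},
\]
possibly up to a local isometric encoding of clock and work into particles on a line. Examples are the Aharonov--Gottesman--Irani--Kempe construction, or the construction referenced in~\cite{groundstatehardness}. Frustration-freeness, uniqueness (with input penalty terms $\ket{1}\!\bra{1}$ at time $0$), and an inverse-polynomial gap $1/\poly(T')$ are standard facts. Items 1 and 2 will be imported from these. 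Let $\ket{\psi}=\calC\ket{0^n}$.

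\textbf{Choice of cut and trace-distance bound.} We choose the cut $W_{\calC,\varepsilon}$ to be the sites carrying the work qubits in $W$, together with the entire clock register and any encoding ancillas. The remaining subsystem is then the work qubits in $\overline W$, or their local encoding. Its reduced state is
\[
\sigma=\frac{1}{T'+1}\sum_\tau \mathrm{Tr}_W\bigl[\ket{\psi_\tau}\bra{\psi_\tau}\bigr].
\]
Orthogonality of the clock states kills the cross terms; if the encoding interleaves clock and work, we instead argue via a local isometry that acts separately on each side. For $\tau\ge T$ we have $\ket{\psi_\tau}=\ket{\psi}$. Hence $\|\sigma-\rho\|_1\le 2T/(T'+1)$, where $\rho=\mathrm{Tr}_W[\ket{\psi}\bra{\psi}]$.

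\textbf{Entropy continuity.} The Fannes--Audenaert inequality gives
\[
|\mathsf S(\sigma)-\mathsf S(\rho)|\le \delta\log_2(2^{|\overline W|})+h(\delta),\qquad \delta=T/(T'+1).
\]
Choosing $L=\poly(n,T)/\varepsilon$ large enough makes this at most $\varepsilon$, with $N=\poly(n,|\calC|,1/\varepsilon)$. The cut's entropy equals that of the complement, so this is the same quantity as in the statement.

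\textbf{Main obstacle.} The key difficulty is ensuring the chosen 1D construction has a clock that factors out cleanly across the cut. Unary or domain-wall clocks interleaved with the work qubits in 1D could otherwise entangle clock and work locally. I would first try a construction with a separate clock register, or an encoding where each site is a (clock-state, qubit) pair. In the latter case we use that the work-qubit content on each site can be extracted by a site-local isometry conditioned on the clock, so that entropy across the cut is preserved up to the same mixture argument.
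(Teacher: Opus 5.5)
Your proposal follows the paper's proof essentially step for step: SWAP-serialization, identity padding, tracing out the clock together with the $W$ work qubits so that the remaining reduced state is a mixture within trace-norm distance $2T/(T'+1)$ of $\rho$, Fannes continuity, and importing frustration-freeness, uniqueness and the $1/\poly$ gap from a standard 1D Feynman--Kitaev (AGIK-type) construction.

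The paper additionally encodes each constant-dimensional particle into a block of qubits and adds a one-block penalty $I-P_{\mathrm{code}}$; you should include this step, since the lemma asks for qubits and unused block states would otherwise destroy uniqueness. The paper handles the clock/work factorization you flag only by asserting a fixed local encoding with a separate clock register, so your caveat applies equally to its argument. Note also that in AGIK-style constructions the data qubits themselves move along the chain from round to round, so splitting each site into a type register and a data register does not by itself give a fixed set of sites holding the $\overline W$ qubits during the padding. The cleanest fix is to implement the padding as an idle clock segment that leaves the output qubits on fixed sites.
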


\begin{proof}
    First serialize $\calC$ into a nearest-neighbor circuit on a line. A general polynomial-size circuit can be made 1D-nearest-neighbor with polynomial overhead by inserting SWAP networks, so the new circuit length is $T=\poly(n,|\calC|)$ and its final state is $\calC\ket{0^n}$ up to a known final permutation.

    Append $P$ identity gates, where $P$ will be chosen polynomially large, and let $L=T+P$. The ideal tensor-product history state is
    \[
        \ket{\Gamma}
        =
        \frac{1}{\sqrt{L+1}}
        \sum_{t=0}^{L}
        \ket{\psi_t}\ket{t}_{\mathrm{clock}}.
    \]
    During the final $P+1$ clock times, the work register is the output state $\ket{\psi}=\calC\ket{0^n}$, again up to the known final permutation. Put the clock register on the traced-out side of the cut, together with the work qubits corresponding to $W$. If
    \[
        \gamma=\frac{P+1}{L+1},
    \]
    then the reduced state on the untraced work qubits is
    \[
        \rho_\Gamma=\gamma\rho_\psi+(1-\gamma)\sigma
    \]
    for some state $\sigma$, where
    \[
        \rho_\psi=
        \Ptr{W}{\ket{\psi}\bra{\psi}}.
    \]
    Hence
    \[
        \|\rho_\Gamma-\rho_\psi\|_1\le 2(1-\gamma).
    \]
    By the Fannes continuity bound, if $\delta=2(1-\gamma)$ and the work register has dimension at most $2^n$, then
    \[
        |\Se{\rho_\Gamma}-\Se{\rho_\psi}|
        \le
        \delta n+h(\delta),
    \]
    for $\delta$ sufficiently small. Taking $P\ge \poly(T,n,1/\varepsilon)$ makes this at most $\varepsilon$.

    Finally, apply a standard 1D Feynman--Kitaev circuit-to-Hamiltonian compiler to the padded nearest-neighbor circuit~\cite{AGIK07}. This yields a frustration-free 1D local Hamiltonian $\widetilde{H}$ on $M=\poly(L,n)$ sites of some constant local dimension $d$, whose unique ground state $\ket{\widetilde{\phi}}$ is the padded history state up to a fixed local encoding and whose spectral gap is inverse-polynomial in $L$ and $n$. We next turn this into a qubit Hamiltonian. Let $m=\lceil\log_2 d\rceil$, and encode each site isometrically into a contiguous block of $m$ qubits. Conjugating each local term of $\widetilde{H}$ by this blockwise isometry gives a 1D local Hamiltonian on $N=mM$ qubits. Adding the one-block penalty $I-P_{\mathrm{code}}$ on every block, where $P_{\mathrm{code}}$ projects onto the encoded $d$-dimensional subspace, forces every zero-energy state into the code space. The resulting Hamiltonian is therefore frustration-free, has the encoded state as its unique ground state, and has spectral gap at least the minimum of the original gap and $1$, hence at least $1/\poly(N)$. Because the encoding is block-local and isometric, it preserves the relevant entropy exactly; taking $W_{\calC,\varepsilon}$ to be the union of the qubit blocks corresponding to the clock register together with the work-qubit subset $W$ gives the same entropy bound. This proves the claimed compiler properties.
\end{proof}

\noindent\textbf{Theorem~\ref{thm:ham1dintro} (restated).}
\begin{itshape}
Assuming the intractability of Dense-Sparse \textsf{LPN} for QPT adversaries, for all sufficiently large $n>0$ there exists an efficiently samplable distribution over tuples
\[
    (H_n^{low},H_n^{high},W_n),
\]
where each $H_n^b$ is a 1D $k$-local Hamiltonian on $n$ qubits for some constant $k>0$ and $W_n\subseteq[n]$, such that:
\begin{enumerate}
    \item with overwhelming probability over the sampled tuple, each Hamiltonian $H_n^b$ is frustration-free, has a unique ground state $\ket{\phi_n^b}$, and has spectral gap at least $1/\poly(n)$;
    \item there exists a function $\Delta:\N\to\mathbb{R}_{>0}$ with $\Delta(n)\ge 1/2$ such that, with overwhelming probability over the sampled tuple,
    \[
        \Se{\Ptr{W_n}{\ket{\phi_n^{high}}\bra{\phi_n^{high}}}}
        \ge
        \Se{\Ptr{W_n}{\ket{\phi_n^{low}}\bra{\phi_n^{low}}}}
        +
        \Delta(n);
    \]
    \item for every QPT algorithm $\calA$,
    \[
        \left|
            \prob{1\leftarrow \calA[H_n^{high},W_n]}
            -
            \prob{1\leftarrow \calA[H_n^{low},W_n]}
        \right|
        =
        \negl(n).
    \]
\end{enumerate}
\end{itshape}

\begin{proof}
    Start from the two circuit families and cut $(\calC_m^{low},\calC_m^{high},W_m)$ of \Cref{thm:mainintro}. Let
    \[
        \ket{\psi_m^b}=\calC_m^b\ket{0^m},
        \qquad
        b\in\{low,high\}.
    \]
    For all sufficiently large $m$, \Cref{thm:mainintro} gives an additive gap
    \[
        \Se{\Ptr{W_m}{\ket{\psi_m^{high}}\bra{\psi_m^{high}}}}
        -
        \Se{\Ptr{W_m}{\ket{\psi_m^{low}}\bra{\psi_m^{low}}}}
        \ge
        \Delta(m)
    \]
    with $\Delta(m)\ge 1$.

    Apply \Cref{lem:paddedhistory} to each branch with $\varepsilon=1/10$. Using the same deterministic compiler in both branches produces 1D local Hamiltonians $H_{n(m)}^{low},H_{n(m)}^{high}$ on $n(m)=\poly(m)$ qubits, a common compiler-defined cut $W_{n(m)}$, and unique ground states $\ket{\phi_{n(m)}^{low}},\ket{\phi_{n(m)}^{high}}$, such that each Hamiltonian is frustration-free, inverse-polynomially gapped, and
    \[
        \left|
        \Se{\Ptr{W_{n(m)}}{\ket{\phi_{n(m)}^b}\bra{\phi_{n(m)}^b}}}
        -
        \Se{\Ptr{W_m}{\ket{\psi_m^b}\bra{\psi_m^b}}}
        \right|
        \le
        \frac{1}{10}
    \]
    for $b\in\{low,high\}$. Therefore
    \[
    \begin{aligned}
        &\Se{\Ptr{W_{n(m)}}{\ket{\phi_{n(m)}^{high}}\bra{\phi_{n(m)}^{high}}}}
        -
        \Se{\Ptr{W_{n(m)}}{\ket{\phi_{n(m)}^{low}}\bra{\phi_{n(m)}^{low}}}}\\
        &\qquad\ge
        \Delta(m)-\frac{1}{5}
        \ge
        \frac{1}{2}
    \end{aligned}
    \]
    for all sufficiently large $m$.

    Indistinguishability is preserved because the map
    \[
        (\calC,W)\longmapsto (H_{\calC,1/10},W_{\calC,1/10})
    \]
    is deterministic and polynomial time. A QPT distinguisher for the two history-state Hamiltonian descriptions would therefore distinguish the two circuit descriptions from \Cref{thm:mainintro}.

    The number $n=n(m)$ of qubits in the history-state Hamiltonian is polynomially bounded in $m$. To obtain every sufficiently large system size, pad with additional one-local terms that pin extra qubits to a fixed product state and place those pinned qubits outside the cut. This does not change the ground-state entropy, uniqueness, frustration-freeness, or inverse-polynomial gap. Re-indexing by the total qubit count gives the theorem, with for example $\Delta(n)=1/2$ on the unpadded sizes and the corresponding padded values on all sufficiently large sizes.
\end{proof}

\nottoggle{full}{
  \bibliographystyle{splncs04}
}{
  \bibliographystyle{alpha}
}
\bibliography{add}

\iftoggle{full}{

	\appendix
	\makeatletter
	\renewcommand*{\theHsection}{appendix.\Alph{section}}
	\renewcommand*{\theHsubsection}{appendix.\Alph{section}.\arabic{subsection}}
	\makeatother

	\addtocontents{toc}{\protect\setcounter{tocdepth}{1}}

\section{Poor man's GHZ states and an alternative pseudoentanglement construction}
\label{app:pmghz}

The main text uses a randomized encoding for the pseudoentanglement construction. We include here an alternate construction based on poor man's GHZ states. Poor man's GHZ states are GHZ states with an unknown Pauli $X$ offset recorded in an auxiliary shift register. In contrast to ordinary GHZ states, PMGHZs can be prepared in constant depth, as shown in~\cite{watts}. Here, we will consider biased versions of these states, meaning that the two components of the GHZ superposition are not equally weighted.

\begin{definition}[Poor man's GHZ states]
    \label{def:pmghz}
    For $\ell>0$, $\eps\in[0,1]$, and $\bz\in\bits^\ell$, define the biased poor man's GHZ state
    \[
        \ket{PMGHZ_{\eps,\ell}(\bz)}
        =
        \sqrt{1-\eps}\ket{\bz}
        +
        \sqrt{\eps}\ket{\bz\oplus 1^\ell}.
    \]
    When $\eps=1/2$, this is the unbiased poor man's GHZ state.
\end{definition}

We now restate the proof from~\cite{watts} that such states can be prepared in constant depth on a 1D architecture, adapting it to the biased version of these states.

\begin{proposition}[Poor man's GHZ preparation~\cite{watts}]
    \label{prop:pmghz}
    For every $\ell>0$ and every $\eps\in[0,1]$, there exists a 1D-local constant-depth quantum circuit, using only one- and two-qubit gates, whose output on $2\ell-1$ qubits can be written as
    \[
        \frac{1}{\sqrt{2^{\ell-1}}}
        \sum_{\bw\in\bits^{\ell-1}}
        \ket{PMGHZ_{\eps,\ell}(\bz(\bw))}\ket{\bw},
    \]
    where $\bz(\bw)\in\bits^\ell$ is determined by $\bz_1(\bw)=0$ and
    \[
        \bz_{i+1}(\bw)=\bz_i(\bw)\oplus \bw_i
        \qquad\text{for all } i\in[\ell-1].
    \]
    In other words, the adjacent differences of $\bz(\bw)$ are the bits of $\bw$.
\end{proposition}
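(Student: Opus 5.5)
The plan is to give an explicit depth-two 1D circuit on $2\ell-1$ qubits, interleaved on a line as $A_1,B_1,A_2,B_2,\ldots,B_{\ell-1},A_\ell$. The $A_i$ are the GHZ qubits and the $B_i$ form the shift register. I will then verify the output amplitudes by direct computation.

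\textbf{Step 1: initial product state.} Initialize $A_1$ as $\sqrt{1-\eps}\ket{0}+\sqrt{\eps}\ket{1}$. Initialize every $A_i$ with $i\ge 2$ as $\ket{+}$, and every $B_i$ as $\ket{0}$. This is one layer of single-qubit gates, namely an $R_y$ rotation and Hadamards.

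\textbf{Step 2: parity extraction.} For each $i\in[\ell-1]$, compute $A_i\oplus A_{i+1}$ into $B_i$ with two CNOTs, $A_i\to B_i$ and $A_{i+1}\to B_i$. All of these gates are nearest-neighbour on the line. Each $A$ qubit is a control in at most two gates and each $B$ qubit is a target of two gates, so a simple two-colouring gives depth $2$: even-indexed pairs in one layer, odd-indexed pairs in the other.

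\textbf{Step 3: rewriting the output.} Before the CNOTs the state is
\[
\frac{1}{\sqrt{2^{\ell-1}}}\sum_{\bx\in\bits^\ell}\bigl(\sqrt{1-\eps}\,[\bx_1=0]+\sqrt{\eps}\,[\bx_1=1]\bigr)\ket{\bx}_A\ket{0^{\ell-1}}_B .
\]
After the CNOTs, $B$ holds $\bw$ with $\bw_i=\bx_i\oplus\bx_{i+1}$. Group the sum by $\bw$. For fixed $\bw$, exactly two strings $\bx$ are consistent with it:
\begin{itemize}
\item $\bx=\bz(\bw)$, which has $\bx_1=0$ and carries weight $\sqrt{1-\eps}$;
\item $\bx=\bz(\bw)\oplus 1^\ell$, which has $\bx_1=1$ and carries weight $\sqrt{\eps}$.
\end{itemize}
Here $\bz(\bw)$ is defined by the recursion in the statement. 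Each $\bw$ therefore contributes $\ket{PMGHZ_{\eps,\ell}(\bz(\bw))}\ket{\bw}$ with the overall factor $2^{-(\ell-1)/2}$, which is the claimed state.

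\textbf{Main obstacle.} The only care needed is in checking that both $\bx$ branches come with the correct normalization. Uniform weights on $\bx_2,\ldots,\bx_\ell$ give the factor $2^{-(\ell-1)/2}$, and the bias is carried entirely by $\bx_1$. Since both $\bz(\bw)$ and its complement are reached from the same $\bw$, the normalization works out. The edge cases are also fine: $\eps\in\{0,1\}$ needs no change to the construction, and $\ell=1$ is just the single rotated qubit.
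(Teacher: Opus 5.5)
Your proposal is correct and matches the paper's proof essentially step for step. It uses the same interleaved line layout, the same biased $R_y$ on the first logical qubit with $\ket{+}$ on the rest, and the same two layers of nearest-neighbour CNOTs: all $A_i\to B_i$ in one layer and all $A_{i+1}\to B_i$ in the other, which is what your ``alternating edges of the path'' colouring amounts to. It then regroups by $\bw$ into the two preimages $\bz(\bw)$ and $\bz(\bw)\oplus 1^\ell$ exactly as the paper does.
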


\begin{proof}
    Arrange $2\ell-1$ qubits on a line as
    \[
        \br_1,\mathbf{b}_1,\br_2,\mathbf{b}_2,\ldots,\mathbf{b}_{\ell-1},\br_\ell,
    \]
    where the $\br_i$ will form the logical register (i.e.\ the qubits that will comprise the PMGHZ state) and the $\mathbf{b}_i$ the shift register (i.e.\ the qubits encoding the parities). Initialize the qubits in the state
    \[
        \left(\sqrt{1-\eps}\ket{0}_{\br_1}+\sqrt{\eps}\ket{1}_{\br_1}\right)
        \otimes
        \ket{+}_{\br_2}\otimes\cdots\otimes\ket{+}_{\br_\ell}
        \otimes
        \ket{0}_{\mathbf{b}_1}\otimes\cdots\otimes\ket{0}_{\mathbf{b}_{\ell-1}}.
    \]
    The first logical qubit can be prepared from $\ket{0}$ by
    \[
        R_y(2\arcsin\sqrt{\eps})=R_y(\arccos(1-2\eps)).
    \]
    Apply two layers of nearest-neighbor CNOT gates:
    \begin{enumerate}
        \item for each $i\in[\ell-1]$, apply $\mathsf{CNOT}(\br_i,\mathbf{b}_i)$;
        \item for each $i\in[\ell-1]$, apply $\mathsf{CNOT}(\br_{i+1},\mathbf{b}_i)$.
    \end{enumerate}
    Within each layer the gates act on disjoint neighboring pairs. Let
    \[
        \alpha_0=\sqrt{1-\eps},\qquad \alpha_1=\sqrt{\eps},
    \]
    and for $\bu=(\bu_1,\ldots,\bu_\ell)$ define
    \[
        \Delta(\bu)=(\bu_1\oplus \bu_2,\ldots,\bu_{\ell-1}\oplus \bu_\ell).
    \]
    Expanding the $\ket{+}$ states, the final state is
    \[
        \frac{1}{\sqrt{2^{\ell-1}}}
        \sum_{\bu\in\bits^\ell}
        \alpha_{\bu_1}\ket{\bu}\ket{\Delta(\bu)}.
    \]
    For every shift string $\bw$, the two strings satisfying $\Delta(\bu)=\bw$ are $\bz(\bw)$ and $\bz(\bw)\oplus 1^\ell$, where $\bz_1(\bw)=0$. Regrouping the sum by $\bw$ gives the claimed expression.
    Note that if one were to measure the shift register, the logical register would collapse to one PMGHZ state whose shift will be determined by the measurement outcome.
\end{proof}

With this, we are ready to give the alternate constant-depth pseudoentanglement construction.

\subsection{Mirrored-shift PMGHZ construction}
\label{app:pmghzcompiler}

Recall that the main construction, from \Cref{sec:qubitproof}, used a constant-depth randomized encoding for linear mappings of the form $\bx\mapsto \mathbf{M}\bx$. Here, we will effectively do the same thing, except the encoding will be fundamentally quantum and will rely on PMGHZ states. The intuition is that the linear mapping could be implemented in constant depth if we had sufficiently many copies of each input bit and sufficiently many copies of each output bit, since all pairwise interactions could then be parallelized. GHZ states effectively provide many coherent copies of one label bit, meaning the bit that distinguishes the two components $\ket{0^\ell}$ and $\ket{1^\ell}$ of a GHZ block. They also allow parity computations in phase: if one acts with $Z^a$ and $Z^b$ on distinct qubits of a GHZ state, the relative phase becomes $(-1)^{a\oplus b}$, effectively performing a nonlocal parity computation. Since the input bits in our application have bias $t/m$, we would have to use biased GHZ states for the input blocks. Ordinary GHZ states are not preparable in constant depth, so we substitute PMGHZ states. The point of the mirrored-shift construction below is to make the resulting Pauli $X$ offsets removable by unitaries that are local with respect to the entanglement cut, as this will leave the entropy unchanged.

More formally, for each input coordinate $j\in[m]$, prepare a biased PMGHZ block on logical sites
\[
    \br_{j,0},\br_{j,1},\ldots,\br_{j,q},
\]
with bias $p=t/m$. The distinguished site $\br_{j,0}$ stores the logical input bit, while $\br_{j,i}$ is used in row $i$. For each output row $i\in[q]$, prepare an unbiased PMGHZ block on sites
\[
    \bo_{i,1},\ldots,\bo_{i,m}.
\]
Let $\bu$ denote all input-block shift strings and $\bv$ all output-block shift strings. The cut used for this construction is the following fixed bipartition. Let $A$ be the input side and $B$ be the output side. Before mirroring, $A$ contains all input logical sites $\br_{j,i}$ and the original input-block shift registers $\bu$, while $B$ contains all output logical sites $\bo_{i,j}$ and the original output-block shift registers $\bv$. To make the offset corrections local across this cut, add mirror registers: copy every input shift bit coherently to a fresh qubit $\bu^{\mathrm{mir}}$ placed on $B$, and copy every output shift bit coherently to a fresh qubit $\bv^{\mathrm{mir}}$ placed on $A$. These are CNOTs from computational-basis shift registers; they are not measurements. Thus the final cut is
\[
    A
    =
    \{\br_{j,i}\}_{j,i}
    \cup
    \{\text{original input shifts }\bu\}
    \cup
    \{\text{mirror copies }\bv^{\mathrm{mir}}\},
\]
and
\[
    B
    =
    \{\bo_{i,j}\}_{i,j}
    \cup
    \{\text{original output shifts }\bv\}
    \cup
    \{\text{mirror copies }\bu^{\mathrm{mir}}\}.
\]
This is the cut across which all entropies are computed. It is fixed and the same for the low and high branches. If $K_{\mathrm{PM}}$ is the total number of shift bits copied across the cut, then, after local offset-removal unitaries within $A$ and within $B$, the mirror registers contribute exactly $K_{\mathrm{PM}}$ ebits across the cut.

For every matrix entry $\mathbf{M}_{i,j}=1$, apply a $\mathsf{CZ}$ gate between $\br_{j,i}$ and $\bo_{i,j}$. Conditioned on shifts, write the input offset at this position as $a_{j,i}(\bu_j)$ and the output offset as $b_{i,j}(\bv_i)$. If $\bx_j$ is the label bit of the input PMGHZ block and $\by_i$ is the label bit of output PMGHZ block $i$ (that is, the bit selecting which of the two coherent components of the corresponding PMGHZ block is present before the shift offset is applied), then the corresponding phase exponent is
\[
    \mathbf{M}_{i,j}(\bx_j\oplus a_{j,i}(\bu_j))(\by_i\oplus b_{i,j}(\bv_i)).
\]
Summing over all entries gives
\[
\begin{aligned}
    &\sum_{i,j}\mathbf{M}_{i,j}(\bx_j\oplus a_{j,i}(\bu_j))(\by_i\oplus b_{i,j}(\bv_i))\\
    &\qquad=
    \by\cdot \mathbf{M}\bx
    \oplus
    \by\cdot\boldsymbol{\sigma}(\bu)
    \oplus
    \bx\cdot\boldsymbol{\tau}(\bv)
    \oplus
    \kappa(\bu,\bv),
\end{aligned}
\]
where
\[
    \boldsymbol{\sigma}_i(\bu)=\bigoplus_j \mathbf{M}_{i,j}a_{j,i}(\bu_j),
    \qquad
    \boldsymbol{\tau}_j(\bv)=\bigoplus_i \mathbf{M}_{i,j}b_{i,j}(\bv_i),
\]
and
\[
    \kappa(\bu,\bv)=\bigoplus_{i,j}\mathbf{M}_{i,j}a_{j,i}(\bu_j)b_{i,j}(\bv_i).
\]
The useful part of the first term is the bit string $\mathbf{M}\bx$: the coefficient of the auxiliary output PMGHZ label bit $\by_i$ is
\[
    \bigoplus_j \mathbf{M}_{i,j}\bx_j
    =
    (\mathbf{M}\bx)_i.
\]
Thus, before accounting for PMGHZ offsets, the retained output label qubits are a phase encoding of $\mathbf{M}\bx$. The other terms are offset-induced corrections. Because of the mirror registers, side $A$ has local access to $\bv$ through $\bv^{\mathrm{mir}}$ and can remove $\bx\cdot\boldsymbol{\tau}(\bv)$ by controlled-$Z$ phase corrections, while side $B$ has local access to $\bu$ through $\bu^{\mathrm{mir}}$ and can remove $\by\cdot\boldsymbol{\sigma}(\bu)\oplus\kappa(\bu,\bv)$ by controlled-$Z$ phase corrections using its original $\bv$ registers as well. Controlled-$X$ gates, controlled by the shift registers on the same side of the cut, then remove the $X$-offsets inside the PMGHZ blocks. Finally, within each input block, CNOTs from the distinguished qubit $\br_{j,0}$ to the other logical sites map the repetition state $\ket{x_j}^{\otimes(q+1)}$ to $\ket{x_j}\ket{0^q}$; within each output block, CNOTs from one chosen representative qubit to the other output logical sites similarly keep a single label qubit and reset the rest. Applying Hadamards to these retained output label qubits converts the Fourier-basis encoding into the computational-basis value $\ket{\mathbf{M}\bx}$. These operations are used only to analyze the entanglement entropy: they show local equivalence across the cut. Thus the compiled state is locally equivalent to
\[
    \ket{\Phi_{\mathbf{M}}}
    \otimes
    \ket{\mathrm{EPR}_{K_{\mathrm{PM}}}}
    \otimes
    \ket{0\cdots 0},
\]
where
\[
    \ket{\Phi_{\mathbf{M}}}
    =
    \sum_{\bx}\sqrt{\prob{X=\bx}}\ket{\bx}\ket{\mathbf{M}\bx}.
\]
As such, the entanglement entropy is
\[
    K_{\mathrm{PM}}+H(\mathbf{M}X)
\]
across the cut $A:B$ defined above. Since $K_{\mathrm{PM}}$ is branch-independent when all shift bits are mirrored in both branches, the additive high-minus-low entropy gap is exactly the same as for the ideal state $\sum_{\bx}\sqrt{\prob{X=\bx}}\ket{\bx}\ket{\mathbf{M}\bx}$. The main text uses the randomized-encoding construction because the entropy calculation is purely classical and avoids tracking offset-induced phases.

\section{A constant-local collision-resistant hash from random-code \texorpdfstring{$\mathsf{bSVP}$}{bSVP}}
\label{app:crh}

This section gives a simple cryptographic corollary of the randomized-encoding theorem from \Cref{sec:retheorem}. The resulting hash family has bounded fan-in and bounded fan-out, and is based on the bounded syndrome-decoding / bounded shortest-vector problem for \emph{random} linear codes. Compared with earlier simple-hash constructions, the gain is that the coding assumption no longer has to come from an LDPC family; the tradeoff is weaker shrinkage because the randomized encoding introduces quadratic overhead.

\begin{definition}[Bounded syndrome decoding / bounded shortest vector]
Fix a constant $\beta\in(0,1)$. The problem $\mathsf{bSVP}_{\beta}$ for a matrix ensemble asks, given a matrix $\mathbf{M}\in\F_2^{\mu\times n}$ sampled from that ensemble, to find a nonzero vector $\bv\in\F_2^n$ such that
\[
    \mathbf{M}\bv=0
    \qquad\text{and}\qquad
    |\bv|\le \beta n.
\]
When $\mathbf{M}$ is sampled uniformly from $\F_2^{\mu\times n}$, we refer to this as the \emph{random-code} $\mathsf{bSVP}_{\beta}$ problem.
\end{definition}

For our purposes, the assumption is simply that for suitable parameters $n,\mu=\Theta(\lambda)$ and a suitable constant $\beta>0$, no probabilistic (or quantum) polynomial-time adversary can solve random-code $\mathsf{bSVP}_{\beta}$ with non-negligible probability.

We also use a standard injective local map whose image consists of low-weight strings.

\begin{definition}[Local low-weight embedding]
An efficiently computable map
\[
    \mathsf{Exp}:\bits^k\to\bits^n
\]
is a \emph{local low-weight embedding} with relative weight $\beta$ if the following hold:
\begin{enumerate}
    \item $\mathsf{Exp}$ is injective;
    \item every output bit depends on at most $O(1)$ input bits, and every input bit influences at most $O(1)$ output bits;
    \item for every $\bx\in\bits^k$, one has $|\mathsf{Exp}(\bx)|\le \beta n/2$.
\end{enumerate}
\end{definition}

Such maps are standard in the low-complexity cryptography literature; see, for example,~\cite{EPRINT:Applebaum17,JC:AppIshKus18}. We use them as a black box.

Let $\mathbf{M}\in\F_2^{\mu\times n}$ be public and let
\[
    \mathsf{RE}_{\mathbf{M}}(\bv;\br,\bs)=(\bw,\bz)
\]
be the randomized encoding from \Cref{thm:reintro}, where now the input length is $n$ and the output length is $\mu$. Thus $\br\in\bits^{n\mu}$ and $\bs\in\bits^{\mu(n-1)}$, and the output consists of
\[
    \bw\in\bits^{n\mu}
    \qquad\text{and}\qquad
    \bz\in\bits^{n\mu}.
\]
Define the deterministic hash family
\[
    H_{\mathbf{M}}:\bits^{k+n\mu+\mu(n-1)}\to\bits^{2n\mu}
\]
by
\[
    H_{\mathbf{M}}(\bx,\br,\bs)
    :=
    \mathsf{RE}_{\mathbf{M}}(\mathsf{Exp}(\bx);\br,\bs).
\]

\begin{theorem}
Assume that random-code $\mathsf{bSVP}_{\beta}$ is hard for probabilistic (or quantum) polynomial-time algorithms, and let $\mathsf{Exp}$ be any local low-weight embedding with relative weight $\beta$. Then the family $\{H_{\mathbf{M}}\}_{\mathbf{M}}$ is a collision-resistant hash family with bounded fan-in and bounded fan-out.
\end{theorem}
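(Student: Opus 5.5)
Two properties need proof: locality, and collision resistance via a reduction to random-code $\mathsf{bSVP}_{\beta}$.

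\emph{Locality.} This is a composition argument. By item~2 of \Cref{thm:reintro}, every output bit of $\mathsf{RE}_{\mathbf{M}}$ depends on at most three of its source bits $(\bv,\br,\bs)$, and every source bit influences at most three output bits. The masks $\br,\bs$ are inputs of $H_{\mathbf{M}}$ directly, and each coordinate of $\bv=\mathsf{Exp}(\bx)$ depends on $O(1)$ bits of $\bx$. So each output bit of $H_{\mathbf{M}}$ depends on $O(1)$ input bits. In the other direction, each bit of $\bx$ influences $O(1)$ coordinates of $\bv$, and each of those influences at most three outputs. Hence fan-in and fan-out are both $O(1)$.

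\emph{Collision resistance.} Let $\mathcal{A}$ output a collision $(\bx,\br,\bs)\neq(\bx',\br',\bs')$ with $H_{\mathbf{M}}(\bx,\br,\bs)=H_{\mathbf{M}}(\bx',\br',\bs')$ for uniformly random $\mathbf{M}$. The reduction feeds its $\mathsf{bSVP}$ instance $\mathbf{M}$ straight to $\mathcal{A}$; the distributions match exactly. It then outputs $\bv\oplus\bv'$, where $\bv=\mathsf{Exp}(\bx)$ and $\bv'=\mathsf{Exp}(\bx')$.

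Decodability (item~1 of \Cref{thm:reintro}) gives $\mathbf{M}\bv=\mathbf{M}\bv'$, hence $\mathbf{M}(\bv\oplus\bv')=0$. The weight bound also holds, since $|\bv\oplus\bv'|\le|\bv|+|\bv'|\le\beta n$ by the low-weight property of $\mathsf{Exp}$.

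\emph{Nonzeroness, the main step.} It remains to show $\bv\oplus\bv'\neq 0$, i.e.\ that a collision cannot have $\bv=\bv'$. I would show that for a fixed $\bv$, the map $(\br,\bs)\mapsto\mathsf{RE}_{\mathbf{M}}(\bv;\br,\bs)$ is injective.
\begin{itemize}
\item From the proof of \Cref{thm:reintro}, $\br$ is recovered from $\bw$ and $\bv$ via $\br_{j,i}=\bv_j\oplus a_{j,i}(\bw)$.
\item With $\br$ known, subtracting $\mathbf{M}_{i,j}\br_{j,i}$ from each $\bz_{i,j}$ leaves $\Delta_i(\bs_i)$.
\item $\Delta_i$ is a bijection $\bits^{n-1}\to E_n$, so $\bs$ is recovered.
\end{itemize}
Hence if $\bv=\bv'$ the collision forces $(\br,\bs)=(\br',\bs')$. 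Injectivity of $\mathsf{Exp}$ then forces $\bx=\bx'$, contradicting that the two inputs differ.

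So every collision yields a nonzero, low-weight kernel vector of $\mathbf{M}$, and the reduction succeeds with the same probability as $\mathcal{A}$. This contradicts the hardness assumption.

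\emph{Compression.} The input length is $k+n\mu+\mu(n-1)$ and the output length is $2n\mu$. The hash therefore shrinks whenever $k>\mu+1$, which can be arranged since $\mathsf{Exp}$ can have $k=\Theta(n)$ while $\mu$ is chosen as a sufficiently small constant fraction of $n$ in the parameter regime where the assumption is made. This is exactly the weak compression the section mentions.
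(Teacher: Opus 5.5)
Your proposal is correct and follows essentially the same route as the paper. Locality comes from composing $\mathsf{Exp}$ with the three-local encoding. Collision resistance comes from decoding both sides to get $\mathbf{M}(\mathsf{Exp}(\bx)\oplus\mathsf{Exp}(\bx'))=0$, with trivial collisions ruled out by the bijection $\br\mapsto\bw$ and the unique recovery of each row of $\bs$ from $\bz$. Your case split on $\bv=\bv'$ is equivalent to the paper's split on $\bx=\bx'$ by injectivity of $\mathsf{Exp}$. In your extra compression remark, the exact condition is $k>\mu$, since the shrinkage is exactly $k-\mu$; your $k>\mu+1$ is merely non-tight.
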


\begin{proof}
The locality statement is immediate from the definition. Every output bit of $H_{\mathbf{M}}$ either is a $\bw$-bit or a $\bz$-bit of the randomized encoding. By \Cref{thm:reintro}, each such bit depends on at most three source bits from $(\mathsf{Exp}(\bx),\br,\bs)$. Since $\mathsf{Exp}$ itself has constant fan-in and fan-out, composing it with $\mathsf{RE}_{\mathbf{M}}$ preserves constant input locality and constant output locality.

It remains to prove collision resistance. Suppose
\[
    H_{\mathbf{M}}(\bx,\br,\bs)=H_{\mathbf{M}}(\bx',\br',\bs').
\]
By perfect decodability of the randomized encoding, decoding both sides yields
\[
    \mathbf{M}\mathsf{Exp}(\bx)=\mathbf{M}\mathsf{Exp}(\bx').
\]
Hence
\[
    \mathbf{M}(\mathsf{Exp}(\bx)\oplus \mathsf{Exp}(\bx'))=0.
\]
If $\bx\neq \bx'$, then injectivity of $\mathsf{Exp}$ implies that
\[
    \bv:=\mathsf{Exp}(\bx)\oplus \mathsf{Exp}(\bx')
\]
is nonzero. Moreover,
\[
    |\bv|
    \le
    |\mathsf{Exp}(\bx)|+|\mathsf{Exp}(\bx')|
    \le
    \beta n.
\]
Thus a nontrivial collision yields a nonzero vector $\bv$ of Hamming weight at most $\beta n$ in the kernel of $\mathbf{M}$, which solves the random-code $\mathsf{bSVP}_{\beta}$ problem.

If instead $\bx=\bx'$, then the equality of outputs forces $(\br,\bs)=(\br',\bs')$. Indeed, the $\bw$-part uniquely determines $\br$ once $\mathsf{Exp}(\bx)$ is fixed, because the map $\br\mapsto \bw$ is bijective for every input. After $\br$ is determined, the boundary convention $\bs_{i,0}=\bs_{i,n}=0$ makes each row of $\bs$ uniquely recoverable from the corresponding row of $\bz$ by the recurrence
\[
    \bs_{i,j}=\bz_{i,j}\oplus \bs_{i,j-1}\oplus \mathbf{M}_{i,j}\br_{j,i}.
\]
So equal outputs imply identical inputs. Therefore any collision finder for $H_{\mathbf{M}}$ yields an algorithm for random-code $\mathsf{bSVP}_{\beta}$.
\end{proof}

The main point of comparison is the line of work on low-complexity hash functions from~\cite{EPRINT:Applebaum17,JC:AppIshKus18}. Their constructions achieve constant locality together with \emph{linear shrinkage}, but for collision resistance the underlying coding assumption is correspondingly specialized to sparse/LDPC-like code families. By contrast, the present construction works with \emph{random} linear codes, because the randomized encoding absorbs the dense matrix--vector product into a local sampler. The price is weaker compression.

Indeed, the input length of $H_{\mathbf{M}}$ is
\[
    N_{\mathrm{in}}=k+n\mu+\mu(n-1)=k+2n\mu-\mu,
\]
and the output length is
\[
    N_{\mathrm{out}}=2n\mu.
\]
So the shrinkage is
\[
    N_{\mathrm{in}}-N_{\mathrm{out}}=k-\mu.
\]
In the usual coding-theoretic regime one takes $n,\mu,k=\Theta(\lambda)$, so the total input length is $N_{\mathrm{in}}=\Theta(\lambda^2)$ while the shrinkage is only $\Theta(\lambda)=\Theta(\sqrt{N_{\mathrm{in}}})$. Thus the construction should be viewed as a conceptual corollary of the randomized encoding theorem, rather than as a replacement for the linearly shrinking simple-hash constructions based on LDPC assumptions.

}{}

\end{document}